\pgfplotsset{compat=1.18}
\newtheorem{prop}{Proposition}
\newtheorem{lem}{Lemma}
\newcommand{\ol}[1]{\overline{#1}}
\let\Re\relax
\DeclareMathOperator{\Re}{Re}
\begin{document}
\begin{titlepage}
\begin{center}
{\LARGE City formation by dual migration of firms and workers} \\
\vspace{10mm}
{\Large Kensuke Ohtake}\\
\vspace{10mm}
{\Large Center for General Education, Shinshu University, Matsumoto, Nagano 390-8621, Japan. \\
\vspace{2mm}
{Corresponding author's e-mail address:\\
\vspace{1mm}
{\underline{\bf k\_ohtake@shinshu-u.ac.jp}}}}\\

\vspace{10mm}
{\Large July 30, 2026}
\end{center}
\end{titlepage}

\begin{abstract}
The Core-Periphery model in the new economic geography, which considers the single migration of workers driven by real wage inequality among regions, is extended to incorporate the migration of firms driven by real profit inequality among regions. In this dual-migration model, the behavior of solutions is qualitatively similar to that of single-migration models. That is, 1) spatially homogeneous population distributions become destabilized and eventually form several cities where both firms and workers agglomerate; 2) the number of cities decreases as transport costs decrease. These results provide a more general theoretical justification for the use of single migration models.
\end{abstract} 

\noindent
{\bf Keywords:\hspace{1mm}}
city formation;
core-periphery model;
economic agglomeration;
new economic geography;
self-organization;
transport costs

\noindent
{\small {\bf JEL classification:} C62, C63, R12, R40}

{\section{Introduction}}
The Core-Periphery (CP) model proposed by \citet{Krug91} is one of the most representative mathematical models based on microeconomic theory to describe economic agglomeration, such as city formation.\footnote{For example, \citet{FujiKrugVenab} and \citet{FujiThis} give an overview of developments in this field.} Many mathematical models descended from the CP model are based on the migration of workers driven by real wage inequality among regions; workers flow out of regions with lower-than-average real wages and into regions with higher-than-average real wages. In this case, the migration of firms is modeled as instantaneously adjusting to worker migration. In such ``single migration'' models, excess profits disappear due to competition arising from the free entry of firms.

In reality, however, excess profits remain more or less, and firms migrating in search of excess profits in the spatial economy cannot be ignored. In this context, we must mention pioneering models devised by \citet{Puga99}\footnote{In the Puga model, profits in the region where firms exist are zero in the long-run equilibrium. On the other hand, in the models by \citet{PicaThiToul04}, excess profits do not disappear because free entry is restricted.} and \citet{PicaThiToul04},\footnote{They have derived various models depending on players to which profits are distributed, and the present paper is based on what they call a ``entrepreneurs model'' in which all profits belong to owners of firms, who are not workers.} which incorporates the profit-driven migration of firms. In their models, firms migrate in a gradual response to profit inequality between two regions, while workers adjust instantaneously so that real income equality between the two regions is maintained.\footnote{\citet[p.~310, Footnote 11 on p.~311]{Puga99} and \citet[p.~155]{PicaThiToul04}} Hence, we may regard their models as other single migration models. Under such firm-led single migration models, results similar to the standard  NEG models, such as the CP model, are obtained, and \citet[p.~298]{FujiThis}, citing \citet{muth1971migration}, conclude that ``{\it it seems fair to say that NEG is agnostic about ``who'' is the egg and the chicken}''. However, we would like to pose the question: if both firms and workers could migrate independently, rather than one being subordinate to the other, what would the conclusion of NEG be?

This paper generalizes the single-migration model and proposes a dual-migration model in which both firms and workers gradually migrate, driven by their own interests. To this end, we extend the entrepreneurs model proposed by \citet{PicaThiToul04}, which is a firm-led, single-migration model, to one in a continuous space, while modifying it so that worker migration also progresses gradually in response to real wage inequality. Then, we obtain a model with ``dual migration process'' of firms and workers. The behavior of solutions of the model is then investigated in detail. In particular, we consider the model on a continuous one-dimensional circumference and focus on agglomeration patterns formed by the dual migration process.

The results of this paper are summarized as follows. First, we analyze the stability of a homogeneous stationary solution where all quantities ---namely, population of firms, population of workers, nominal wages, price indices, and real wages--- are uniformly distributed along the circumference. By the Fourier method, a small perturbation added to the homogeneous stationary solution can be expressed as a superposition of countably infinite eigenmodes, each having its own unique spatial frequency. Each mode decays over time under sufficiently high transport costs, while it grows over time under sufficiently low transport costs. Therefore, there exists a critical point of transport costs that serves as the threshold separating temporal decay from growth of each eigenmode. The value of this critical point decreases as the absolute value of the spatial frequency of each eigenmode becomes smaller. Second, we analytically show that, despite firms and workers migrating independently, the agglomerated regions of these two players coincide in any stationary solutions. Finally, we perform numerical simulations of the time evolution of solutions starting near the homogeneous stationary solution to clarify the shape of non-homogeneous stationary solutions that asymptotically form. These non-homogeneous solutions, like in standard NEG models in continuous space, exhibit spiky population distributions, and the number of spikes decreases as transport costs fall.

Let us list several other literature related to this research. The results of this paper, in which a homogeneous stationary solution is destabilized and forms spatial structures, can be placed in the context of symmetry-breaking in the spatial economy or, more broadly, in the science of self-organization. See \citet{Krug} or \citet{Matsu2008sym} for example. In the context of NEG, although not as numerous as discrete regional models, several studies have also examined continuous space models. The results of linear stability analysis using Fourier analysis obtained by \citet[Chapter 6]{FujiKrugVenab} and \citet{OhtakeYagi_Asym} in the single migration model are also valid for the dual migration model. \citet{TabaEshi_explosion} and \citet{Fab} provide analytical discussions regarding the spike-like stationary solutions of some NEG models. \citet{OhtakeYagi_Asym}, \citet{Ohtake2023cont}, and \citet{Ohtake2025unique} numerically computed the asymptotic behavior of solutions to the standard NEG models on the circumference and found that spiky distributions emerge, and the number of spikes decreasing as transport costs fall.\footnote{Other literature dealing with economic geography models in continuous space includes \citet{Moss}, \citet{MossMar2004}, \citet{ChiAsh08}, \citet{PicaTabu}, \cite{TabaEshiSakaTaka}, \citet{TabaEshi_existence}, \citet{GoKoTa2017cont}, and \citet{TabaEshi23}.} Although the models addressed in these studies are single migration models, the asymptotic behavior of their solutions closely resembles that of the dual migration model presented in this paper. The consistency of these results regarding solution behavior suggests that the more general dual-migration model can justify the theoretical simplification of the single-migration model. In the context of introducing different types of mobile populations, \citet{dePaPapaThis2019origin} construct an abstract model of spatial economy and investigate its non-homogeneous stationary solution. In their model, when two different types of populations are introduced, it is shown that there are spatial patterns in which they do not agglomerate in the same area. This contrasts with the model in this paper, where the different players (mobile firms and workers) eventually agglomerate in the same regions. This difference stems from the fact that their model allows for the two types of populations to interact in a mutually avoiding manner.

The rest of the paper is organized as follows. Section 2 presents the model treated in this paper. Section 3 investigates the stability of a homogeneous stationary solution. Section 4 shows analytical results on stationary solutions and numerical results on the asymptotic behavior of solutions of the model. Section 5 concludes. Section 6 is Appendix that provides some content omitted from the main text.

{\vspace{5mm}}
\section{The model}

\subsection{Settings and assumptions}
An infinite number of regions are continuously and uniformly spaced on a one-dimensional circle with radius $\rho>0$ denoted by $S$. Each region is denoted by $x$, $y$, or $z\in S$. 

The economy consists of two sectors: manufacturing and agriculture. In manufacturing, an infinite number of varieties of goods are produced by firms under monopolistic competition. One variety of manufactured goods is produced by one firm. Same as \citet{PicaThiToul04}, we do not assume increasing returns to scale in manufacturing. Instead, a constant-return technology with no fixed costs is assumed. In the agricultural sector, a single homogeneous variety of agricultural good is produced under perfect competition with a constant-return technology. The transportation of manufactured goods among regions incurs so-called iceberg transport costs,\footnote{This is introduced by \citet{Sam1952}. That is, a fraction of a good vanishes during transportation. Hence, one or more units of a good must be shipped to deliver one unit of the good to its destination.} while an agricultural good does not.

There are two types of workers: manufacturing and agricultural. Manufacturing workers are employed by firms and engaged in producing a large variety of manufactured goods. Agricultural workers produce a homogeneous agricultural good. The workers cannot change the sectors in which they are employed. The manufacturing workers can move freely between regions, while agricultural workers cannot. The total population of each of these two types of workers is finite and constant.

Entrepreneurs own firms, each owning one firm. Like manufacturing workers, entrepreneurs can move around in space. The total population of entrepreneurs is also finite and constant, and each firm is indivisible. As a result, since the total number of firms is also fixed, this leads to imperfect competition. Therefore, as \citet[p.~147]{PicaThiToul04} state, it is not explicit increasing returns to scale that generate agglomeration forces in this model, but rather the combination of imperfect competition and indivisibility of  firms.

All consumers, consisting of manufacturing workers, agricultural workers, and entrepreneurs, are assumed to have identical utility functions. The only source of the workers' income is nominal wages, and the only source of the entrepreneurs' income is excess profits. In other words, the entrepreneurs acquire all excess profits generated in the manufacturing sector. All consumers use their income only for consumption without saving or investing. Therefore, their welfare level is measured by their real income, which is their nominal income deflated by the price indices of goods. Firms (Entrepreneurs) and manufacturing workers gradually migrate from regions where their real incomes are below average to regions where they are above average.

\subsection{Consumer behavior}
As in \citet{PicaThiToul04}, we use the Cobb-Douglas utility function
\begin{equation}\label{utility}
U=C^\mu A^{1-\mu},\hspace{3mm}\mu\in[0,1).
\end{equation}
of each consumer. Here, $C$ and $A$ are the consumption of a composite index of manufactured goods and the consumption of the agricultural good, respectively. There are an infinite number of varieties of manufactured goods, and the range of the variety is denoted by a closed interval $[0,n]\subset\mathbb{R}$. Let $c(h)$ be the consumption of the $h$-th variety of the manufactured goods, where $h\in[0,n]$. The composite index of the manufactured goods is defined by
\[
C = \left[\int_0^n c(h)^{\frac{\sigma-1}{\sigma}}  dh\right]^{\frac{\sigma}{\sigma-1}},
\]
where $\sigma>1$ stands for the elasticity of substitution between any two varieties of manufactured goods. The price of the $h$-th variety of manufactured goods is denoted by $p(h)$ while the agricultural good is taken as the num\'{e}raire, and its price is fixed to $1$. The budget constraint is then
\begin{equation}\label{budget}
\int_0^n p(h)c(h) dh + A = Y,
\end{equation}
where $Y$ is the nominal income of the consumer.

Maximizing \eqref{utility} under \eqref{budget} yields the following demand function for each variety
\begin{equation}\label{demand}
c(h)= \mu Yp(h)^{-\sigma}G^{\sigma-1}.
\end{equation}
Here, $G$ is a price index of manufactured goods defined by
\begin{equation}\label{priceindex}
G = \left[\int_0^n p(h)^{1-\sigma}dh\right]^{\frac{1}{1-\sigma}}.
\end{equation}
It is known\footnote{\citet[p.47]{FujiKrugVenab}} that
\[
\int_0^n p(h)c(h)dh = GC
\]
and thus the budget constraint \eqref{budget} is now 
\begin{equation}\label{budget_composite}
GC + A = Y,
\end{equation}
which means that the price index $G$ can be regarded as the price of one unit of the composite index of manufactured goods.

\subsection{Continuous space modeling}\label{subsec:contspace}
We consider a continuous-space model, but the fundamental derivation approach is similar to that for discrete multiple-region models, as in \citet[Chapter 4]{FujiKrugVenab}. To simplify notation, the time variable $t$ is generally not explicitly stated even for time-dependent functions in Subsections \ref{subsec:contspace}-\ref{subsec:modeleqs}.

In this section, the stage for economic activity is a subset $\Omega$ of Euclidean space $\mathbb{R}^n$ where $n\geq 1$. Let $\Phi$, $N$, and $M$ be the total population of agricultural workers, the total population of firms, and the total population of manufacturing workers, respectively. Let $\phi(x)$, $n(x)$, and $m(x)$ be the share of region $x\in\Omega$ in the total population of agricultural workers, firms, and manufacturing workers, respectively. The population density of agricultural workers (resp. firms, manufacturing workers) in region $x$ is then given by $\Phi\phi(x)$ (resp. $Nn(x)$, $Mm(x)$). Note that $\phi(x)\geq 0$, $n(x)\geq 0$, and $m(x)\geq 0$ in any region $x\in\Omega$, and that
\begin{equation}\label{conservation}
\int_\Omega\phi(x)dx = \int_\Omega n(x)dx = \int_\Omega m(x)dx\equiv 1.
\end{equation}
Here, the time variables $t$ in $n$ and $m$ are omitted, but the conservation law \eqref{conservation} is assumed to hold for any time $t\geq 0$.

The iceberg transportation is assumed, that is, $T(x, y)(=T(y, x))\geq 1$ units of a variety of manufactured goods must be shipped from region $x$ to deliver one unit of the variety to region $y$. In addition, it is assumed that all varieties of manufactured goods produced in region $x$ have the same price, denoted $p(x)$, in that region. Therefore, the price in region $y$ of any variety of manufactured goods produced in region $x$ is given by $p(x, y)=p(x)T(x, y)$. 

The price index \eqref{priceindex} then takes different values for different regions as
\begin{align}
G(x) &= \left[\int_\Omega Nn(y)p(y, x)^{1-\sigma}dy\right]^{\frac{1}{1-\sigma}}\\
&= \left[\int_\Omega Nn(y) \left(p(y)T(x, y)\right)^{1-\sigma}dy\right]^{\frac{1}{1-\sigma}}.\label{priceindex_r}
\end{align}

By the demand function \eqref{demand}, the total demand from region $y$ for one variety of manufactured goods produced in region $x$ is
\begin{equation}\label{cs}
c(y) = \mu Y(y) p(x,y)^{-\sigma}G(y)^{\sigma-1}.
\end{equation}
Here, $Y(y)$ is the total nominal income of region $y$, which is the sum of the incomes of all consumers in region $y$, i.e., manufacturing workers, agricultural workers, and entrepreneurs in region $y$.\footnote{To be precise, since the density rather than the number of people in each region is given, $Y(y)$ is not exactly the total income itself, but rather represents its spatial density. However, to avoid redundancy, the term ``density'' may be omitted when referring to a quantity that should strictly be called density, unless particularly necessary.} Therefore, the total income in region $x$ is given by
\begin{equation}\label{totalincome}
Y(x) =  Mw(x)m(x) + \Phi\phi(x) + Ng(x)n(x),
\end{equation}
where $g(x)$ and $w(x)$ are the nominal profit of each firm in region $x$, and the nominal wage of each manufacturing worker in region $x$, respectively. It should be noted that the nominal wage of each agricultural worker is fixed to $1$ in any region.\footnote{As in \citet{PicaThiToul04}, the agricultural good is assumed to be produced under perfect competition with a constant-return technology (Specifically, one unit of agricultural workers produces one unit of the agricultural good.). In this case, the nominal wage of agricultural workers is equal to the nominal price of the agricultural good ($=1$) in market equilibrium. For a discussion on market equilibrium under a constant return technology and perfect competition, see, for example, \citet[pp.~192-193, pp.~217-218]{HayashiMicro2021}}

To satisfy this demand, each firm in region $x$ has to ship $c(y)T(x,y)$ units of the variety. Therefore, the total sales of each firm in region $x$ is given by $q(x)=\int_\Omega c(y)T(x,y)dy$, which is given from \eqref{cs} as
\begin{equation}\label{qr}
q(x) = \mu \int_\Omega Y(y)\left(p(x) T(x,y)\right)^{-\sigma} G(y)^{\sigma-1}T(x,y)dy.
\end{equation}

\subsection{Producer behavior}
The density of manufacturing firms in region $x$ is given by $Nn(x)$. Each firm produces $q(x)$ of one variety, and $q(x)$ is given by \eqref{qr}, hence the total output of manufactured goods in region $x$ is
\begin{align}
Nn(x) q(x) &= \mu  Nn(x) \left(p(x)\right)^{-\sigma}\int_\Omega Y(y)G(y)^{\sigma-1}T(x,y)^{1-\sigma}dy.\label{nrqr}
\end{align}
It is assumed, as in \citet{PicaThiToul04}, that when each firm produces $q$ quantities of a variety of manufactured goods, the required labor input $l$ is given by 
\begin{equation}\label{tech}
l = c_M q,
\end{equation}
where $c_M>0$ stands for a marginal cost. From \eqref{tech}, the total demand for manufacturing labor in region $x$ is $c_MNn(x)q(x)$, while the labor supply in region $x$ is $Mm(x)$. Thus, labor market clearing is achieved when
\begin{equation}\label{clearing}
c_M Nn(x)q(x) = Mm(x).
\end{equation}
It immediately follows from \eqref{clearing} that
\begin{equation}\label{qrmrbunnnonr}
q(x) = \frac{Mm(x)}{c_M Nn(x)}.
\end{equation}
The nominal profit of each firm in region $x$ is 
\begin{equation}\label{profit}
g(x) = (p(x) - c_Mw(x))q(x).
\end{equation}
Using \eqref{qr}, we obtain the price that maximizes \eqref{profit} as
\begin{equation}\label{pricing}
p(x) = \frac{\sigma}{\sigma-1}c_Mw(x).
\end{equation}
Applying \eqref{qrmrbunnnonr} and \eqref{pricing} to \eqref{profit}, we have
\begin{equation}
g(x) = \frac{w(x)}{\sigma-1} \frac{Mm(x)}{Nn(x)}.\label{maximizedprofit}
\end{equation}

\subsection{Model equations}\label{subsec:modeleqs}
Substituting \eqref{maximizedprofit} into \eqref{totalincome}, we see that the total nominal income in region $x$ becomes
\begin{equation}\label{Yr}
Y(x) = \Phi\phi(x) + \frac{\sigma}{\sigma-1}w(x)Mm(x).
\end{equation}

By \eqref{nrqr} and \eqref{pricing}, we see that the total demand for the manufacturing labor in region $x$ given by $c_MNn(x)q(x)$ is
\begin{equation}\label{cMNnrqr}
\begin{aligned}
c_M Nn(x) q(x) &=
\mu c_M Nn(x) \left(\frac{\sigma}{\sigma-1}c_Mw(x)\right)^{-\sigma}\\
&\hspace{25mm}\times \int_\Omega Y(y)G(y)^{\sigma-1}T(x,y)^{1-\sigma}dy.
\end{aligned}
\end{equation}
Putting \eqref{cMNnrqr} equals to the manufacturing labor supply $Mm(x)$ in region $x$, we obtain the nominal wage equation
\begin{equation}\label{nw}
w(x) = \frac{\sigma-1}{\sigma c_M}\left[\mu c_M \frac{Nn(x)}{Mm(x)}\int_\Omega Y(y)G(y)^{\sigma-1}T(x,y)^{1-\sigma}dy\right]^{\frac 1 \sigma}.
\end{equation}
This equation can be interpreted as follows. The closer a firm is to a large market (where $Y(y)$ is large), the more it can afford to pay higher wages.\footnote{Here, the meaning of ``close'' for regions $x$ and $y$ includes the perspective of transport costs, meaning that the value of $T(x,y)$ is small. This also applies to the following paragraph.} Also, the fewer other firms located close to the firm (which translates to high price indices due to reduced varieties), the higher the nominal wage the firm can afford to pay. As described in \citet[p.~53]{FujiKrugVenab}, these are the standard interpretations of the nominal wage equation of the CP model. Furthermore, the equation \eqref{nw} additionally incorporates the direct impact of labor market outcomes through the $\frac{n(x)}{m(x)}$ term. That is, when the number of firms in a region increases relatively, the nominal wage in that region rises; when the number of firms decreases relatively, the nominal wage in that region falls.

From \eqref{priceindex_r} and \eqref{pricing}, we obtain an equation for the price index as
\begin{equation}\label{pri}
G(x) = \frac{\sigma c_M}{\sigma-1}\left[\int_\Omega Nn(y) w(y)^{1-\sigma}T(x,y)^{1-\sigma}dy\right]^{\frac{1}{1-\sigma}}.
\end{equation}
Since the prices of individual varieties and nominal wages are linked by the markup relationship in equation \eqref{pricing}, it is natural that the price index---the weighted average of individual prices---is expressed by the equation \eqref{pri}. As \citet[p.~63]{FujiKrugVenab} state, this equation describes the forward linkages: the more firms agglomerate close to a given region, the lower the price index there becomes due to increased variety, making that region more attractive to consumers.

Maximizing \eqref{utility} under the budget constraint \eqref{budget_composite} yields 
\begin{equation}\label{CandA}
\begin{aligned}
&C = \mu YG^{-1},\\
&A = (1-\mu)Y.
\end{aligned}
\end{equation}
Substituting \eqref{CandA} into \eqref{utility}, we have the indirect utility function
\[
V = \mu^\mu(1-\mu)^{1-\mu} YG^{-\mu},
\]
which allows us to define real income by $YG^{-\mu}$. Hence, the real wage equation and the real profit equation in region $x$ are defined by 
\begin{equation}\label{rw}
\omega(x)=w(x)G(x)^{-\mu}
\end{equation}
and
\begin{equation}\label{rp}
\eta(x)=g(x)G(x)^{-\mu},
\end{equation}
respectively. To be precise, the equation \eqref{rw} refers to manufacturing real wages. However, in our model, only manufacturing workers are concerned with real wages, so we simply refer to them as real wages. By \eqref{maximizedprofit}, \eqref{rw}, and \eqref{rp}, we have
\begin{equation}\label{rp2}
\eta(x)  = \frac{1}{\sigma-1} \frac{Mm(x)}{Nn(x)}\omega(x).
\end{equation}
This equation states that real wages and real profits are proportional. This is a direct consequence of the proportional relationship between nominal profits and nominal wages \eqref{maximizedprofit}. This can be interpreted as a causal relationship: in regions with high wages, manufacturing workers' incomes are high, and consequently, profits are also high due to larger sales. Similar to the nominal wage equation \eqref{nw}, this equation reflects the direct impact of labor market outcomes through the $\frac{m(x)}{n(x)}$ term. That is, when the number of firms in a region increases relatively, the real profit in the region falls (while the real wage in the region rises); when the number of firms decreases relatively, the real profit in the region increases (while the real wage in the region falls).

The firms exit from regions with lower-than-average real profits and enter regions with higher-than-average real profits. The average real profit is defined by $\int_\Omega \eta(y)n(y)dy$. With the time variable denoted as $t$, the migration process is expressed by the replicator equation
\begin{equation}\label{dynn}
\frac{\partial n}{\partial t}(t, x) = v_n\left[\eta(t, x) - \int_\Omega\eta(t, y)n(t, y)\right]n(t, x),
\end{equation}
where $v_n > 0$ is the sensitivity of firms to disparities of real profits. Similarly, the average real wage is defined by $\int_\Omega\omega(y)m(y)dy$, and the following replicator equation is assumed for the migration process of manufacturing workers
\begin{equation}\label{dynm}
\frac{\partial m}{\partial t}(t, x) = v_m\left[\omega(t, x) - \int_\Omega\omega(t, y)m(t, y)\right]m(t, x),
\end{equation}
where $v_m > 0$ is the sensitivity of manufacturing workers to disparities in real wages.\footnote{Such replicator dynamics is also employed in \citet[p.62]{FujiKrugVenab}.}

\noindent

\subsection{Normalizations}
Let us normalize some units of measurement to simplify the model equations. We are free to choose units for the quantity of output, the number of the two types of workers, and the number of firms.

First, we choose a unit for the quantity of output so that the marginal cost in \eqref{tech} satisfies that\footnote{\citet[p.54]{FujiKrugVenab}}
\begin{equation}\label{cMsigm1sig}
c_M = \frac{\sigma-1}{\sigma}.
\end{equation}
Next, we choose units for the number of agricultural and manufacturing workers so that 
\begin{align}
&\Phi = (1-\mu)\sigma,\label{Phi1mmusigma}\\
&M =  \mu(\sigma-1) \label{Mmusigmam1}
\end{align}
hold, respectively. Finally, we choose a unit for the number of firms so that
\begin{equation}\label{Nsigma}
N = \sigma
\end{equation}
holds.

The model equations \eqref{Yr}, \eqref{nw}, \eqref{pri}, \eqref{rw}, \eqref{rp2}, \eqref{dynn}, and \eqref{dynm} with the normalizations \eqref{cMsigm1sig}, \eqref{Phi1mmusigma}, \eqref{Mmusigmam1}, and \eqref{Nsigma} constitute the following system of integral and differential equations:
\begin{equation}\label{0}
\left\{
\begin{aligned}
&Y(t,x) = (1-\mu)\sigma\phi(x) + \mu\sigma w(t,x)m(t,x),\\
&w(t,x) = \left[\frac{n(t,x)}{m(t,x)}\int_\Omega Y(t,y)G(t,y)^{\sigma-1}T(x,y)^{1-\sigma}dy\right]^{\frac 1 \sigma},\\ 
&G(t,x) = \left[\sigma\int_\Omega n(t,y) w(t,y)^{1-\sigma}T(x,y)^{1-\sigma}dy\right]^{\frac{1}{1-\sigma}},\\
&\eta(t,x) = \frac{\mu}{\sigma}\frac{m(t,x)}{n(t,x)}\omega(t,x),\\
&\omega(t,x) = w(t,x)G(t,x)^{-\mu},\\
&\frac{\partial n}{\partial t}(t,x) = v_n\left[\eta(t,x) - \int_\Omega\eta(t,y)n(t,y)dy\right]n(t,x),\\
&\frac{\partial m}{\partial t}(t,x) = v_m\left[\omega(t,x) - \int_\Omega\omega(t,y)m(t,y)dy\right]m(t,x),
\end{aligned}\right.
\end{equation}
with initial conditions $n(0, x)=n_{0}(x)\geq 0$ and $m(0, x)=m_{0}(x)\geq 0$ for all $x\in \Omega$. Here, it is made explicit that all functions other than $\phi$ depend on the time variable $t\geq 0$.

\subsection{Racetrack economy}\label{subsec:functionsonS}
In the following, we consider in particular the continuous space model \eqref{0} with $\Omega=S$, that is, a circle of radius $\rho>0$. The model is what \citet[p.82]{FujiKrugVenab} call the {\it racetrack economy}. Let $d(x,y)$ denote the shorter distance between any two points $x$ and $y$ in $S$. Then, we assume that
\[
T(x,y) = e^{\tau d(x,y)},\hspace{3mm}\forall x,y\in S,
\]
where the parameter $\tau\geq 0$ is referred to as a transport coefficient. In addition, it is useful to introduce a {\it generalized transport coefficient} $\alpha$ as
\begin{equation}\label{gtc}
\alpha := (\sigma-1)\tau \geq 0.
\end{equation}
because the parameters $\sigma$ and $\tau$ often appear in the above form.

It is convenient to identify functions on $S$ with periodic functions on $[-\pi, \pi]$ for the sake of concrete computations as in \citet{Ohtake2023cont} and \citet{Ohtake2025agriculture}. In fact, a point $x\in S$ corresponds one-to-one to an angle $\theta \in [-\pi,\pi)$ in the circular coordinate. Thus, we can write $x=x(\theta)$, which always allows this identification. Therefore, we allow $f(x),~x\in S$ to be written as $f(\theta)$ for $\theta\in[-\pi,\pi)$ and $f(\pi)=f(-\pi)$. By this identification, we can compute the integral on $S$ as
\[
\int_S f(x) dx = \int_{-\pi}^\pi f(\theta) \rho d\theta,
\]
and the shorter distance between $x(\theta)$ and $y(\theta^\prime)$ along $S$ as
\[
d(x,y) = \rho \min\left\{|\theta-\theta^\prime|, 2\pi-|\theta-\theta^\prime|\right\}.
\]

\section{Stability of homogeneous stationary solution}

\subsection{Homogeneous stationary solution}
Assuming that the agricultural workers are uniformly distributed as
\begin{equation}
\phi(x) \equiv \ol{\phi} = \frac{1}{2\pi \rho},~\forall x\in S, \label{phib}
\end{equation}
we obtain a {\it homogeneous stationary solution} in which all other economic factors are also uniformly distributed;
\begin{align}
&n(x) \equiv \ol{n} = \frac{1}{2\pi \rho},&\forall x\in S, \label{hn}\\ 
&m(x) \equiv \ol{m} = \frac{1}{2\pi \rho},&\forall x\in S, \label{hm}\\
&Y(x) \equiv \ol{Y} = \frac{\sigma}{2\pi\rho},&\forall x\in S, \label{hY}\\
&w(x) \equiv \ol{w} = 1,&\forall x\in S, \label{hw}\\
&G(x) \equiv \ol{G} = \left[\sigma\ol{n}\frac{2\left(1-e^{-\alpha\pi \rho}\right)}{\alpha}\right]^{\frac{1}{1-\sigma}},~&\forall x\in S, \label{hG}\\
&\eta(x) \equiv \ol{\eta} =\frac{\mu}{\sigma}\ol{\omega},&\forall x\in S, \\
&\omega(x) \equiv \ol{\omega} = \ol{G}^{-\mu},&\forall x\in S. \label{homega} 
\end{align}
These values are easily obtained by simply placing all the factors as constants in the first five equations in \eqref{0} and further noting that
\[
\int_S T(x,y)^{1-\sigma}dy=\int_Se^{-\alpha d(x,y)}dy
= \frac{2\left(1-e^{-\alpha \rho\pi}\right)}{\alpha}, 
\]
which no longer depends on $x$ in \eqref{hG}.

\subsection{Linearized problem}

Let small perturbations added to the homogeneous share functions $n$ and $m$ be $\Delta n$ and $\Delta m$, respectively. They must satisfy
\[
\int_S \Delta n(t,x)dx=0~\text{and}~\int_S \Delta m(t,x)dx=0,
\]
because of the conservation \eqref{conservation}. These perturbations $\Delta n$ and $\Delta m$ induce other perturbations $\Delta Y$, $\Delta w$, $\Delta G$, $\Delta\eta$, and $\Delta \omega$ to the solutions of the first five equations of \eqref{0}.

Substituting $n(t,x)=\ol{n}+\Delta n(t,x)$, $m(t,x)=\ol{m}+\Delta m(t,x)$, $Y(t,x)=\ol{Y}+\Delta Y(t,x)$, $w(t,x)=\ol{w}+\Delta w(t,x)$, $G(t,x)=\ol{G}+\Delta G(t,x)$, $\eta(t,x)=\ol{\eta}+\Delta\eta(t,x)$, and $\omega(t,x)=\ol{\omega}+\Delta\omega(t,x)$ into \eqref{0}, and neglecting second and higher order terms of the perturbations, we obtain the linearized system
\begin{equation}\label{L}
\left\{
\begin{aligned}
&\Delta Y(t, x) = \mu\sigma\ol{w}\Delta m(t, x) + \mu\sigma\ol{m}\Delta w(t, x),\\
&\Delta w(t,x) = \frac{\sigma-1}{\sigma}\frac{\ol{n}}{\ol{m}}\ol{Y}\ol{G}^{\sigma-2}\int_S\Delta G(t,y)e^{-\alpha d(x,y)}dy\\
&\hspace{35mm} + \frac{\ol{n}}{\ol{m}}\frac{\ol{G}^{\sigma-1}}{\sigma}\int_S\Delta Y(t,y)e^{-\alpha d(x,y)}dy \\
&\hspace{46.5mm} -\frac{1}{\sigma\ol{m}^22\pi\rho}\Delta m(t,x) + \frac{1}{\sigma\ol{m}\ol{n}2\pi\rho}\Delta n(t,x),\\ 
&\Delta G(t,x) = \sigma\ol{G}^\sigma\ol{n}\ol{w}^{-\sigma}\int_S\Delta w(t,y)e^{-\alpha d(x,y)}dy\\
&\hspace{23mm} + \frac{\sigma}{1-\sigma}\ol{G}^\sigma\ol{w}^{1-\sigma}\int_S\Delta n(t,y)e^{-\alpha d(x,y)}dy,\\
&\Delta \eta(t,x) = \frac{\mu}{\sigma}\frac{\ol{m}}{\ol{n}}\Delta\omega(t,x) - \frac{\mu}{\sigma}\frac{\ol{m}}{\ol{n}^2}\ol{\omega}\Delta n(t,x) + \frac{\mu}{\sigma}\frac{\ol{\omega}}{\ol{n}}\Delta m(t,x),\\
&\Delta \omega(t,x) = \ol{G}^{-\mu}\Delta w(t,x) -\mu\ol{w}\ol{G}^{-\mu-1}\Delta G(x),\\
&\frac{d\Delta n}{dt}(t,x) = v_n \ol{n}\Delta \eta(t,x),\\
&\frac{d\Delta m}{dt}(t,x) = v_m \ol{m}\Delta \omega(t,x).
\end{aligned}\right.
\end{equation}

Let us expand these small perturbations on $[0,\infty)\times S$ identified with the corresponding functions on $[0,\infty)\times [-\pi, \pi)$ into the Fourier series concerning the space variable. The Fourier series of a perturbation $\Delta f(t,\theta)$ is defined as
\[
\Delta f(t,\theta) = \frac{1}{2\pi}\sum_{k=0,\pm1,\pm2,\cdots}\hat{f}_k(t) e^{\sqrt{-1}k\theta},
\]
where $e^{\sqrt{-1}k\theta}$ is called the eigenmode with the $k$-th frequency (hereafter abbreviated as the $k$-th mode). The Fourier coefficient $\hat{f}_k$ of the $k$-th mode is given as 
\[
\hat{f}_k(t) = \int_{-\pi}^\pi \Delta f(t, \theta) e^{-\sqrt{-1}k\theta} d\theta
\]
for $k=0,\pm1,\pm2,\cdots$. Substituting the Fourier series of $\Delta n$, $\Delta m$, $\Delta Y$, $\Delta w$, $\Delta G$, $\Delta \eta$, and $\Delta \omega$ into the linearized system \eqref{L}, and using \eqref{hn}, \eqref{hm}, \eqref{hY}, \eqref{hw}, and \eqref{homega}, we obtain the following system of linear equations for the Fourier coefficients
\begin{equation}\label{fc}
\left\{
\begin{aligned}
&\hat{Y}_k = \mu\sigma\hat{m}_k + \mu\sigma\ol{m}\hat{w}_k,\\
&\hat{w}_k = \frac{\sigma-1}{\sigma}\ol{G}^{-1}Z_k\hat{G}_k
+ \frac{1}{\sigma^2\ol{n}}Z_k\hat{Y}_k
-\frac{1}{\sigma\ol{m}^22\pi\rho}\hat{m}_k
+ \frac{1}{\sigma\ol{m}\ol{n}2\pi\rho}\hat{n}_k,\\
&\hat{G}_k = \ol{G} Z_k \hat{w}_k + \frac{\ol{G}}{(1-\sigma)\ol{n}}Z_k\hat{n}_k,\\
&\hat{\eta}_k = \frac{\mu}{\sigma}\hat{\omega}_k - \frac{\mu}{\sigma}\frac{\ol{G}^{-\mu}}{\ol{n}}\hat{n}_k + \frac{\mu}{\sigma}\frac{\ol{G}^{-\mu}}{\ol{n}}\hat{m}_k,\\
&\hat{\omega}_k = \ol{G}^{-\mu}\hat{w}_k - \mu\ol{G}^{-\mu-1}\hat{G}_k,\\
&\frac{d}{dt}\hat{n}_k = v_n\ol{n}\hat{\eta}_k,\\
&\frac{d}{dt}\hat{m}_k = v_m\ol{m}\hat{\omega}_k,
\end{aligned}\right.
\end{equation}
where
\begin{equation}\label{Zk}
Z_k := \frac{\alpha^2\rho^2\left(1-(-1)^ke^{-\alpha \rho\pi}\right)}{\left(k^2+\alpha^2\rho^2\right)\left(1-e^{-\alpha \rho\pi}\right)},\hspace{3mm}k=\pm1,\pm2,\cdots
\end{equation}
Note that the case $k=0$ needs not be considered because $\hat{n}_0=\hat{m}_0=0$. The variable $Z_k$ should be called a {\it transport index} which plays a vital role in the analysis of the linearized system. The transport index $Z_k$ emerges as the result of
\[
\begin{aligned}
\ol{G}^{\sigma-1}\int_{-\pi}^{\pi}e^{\sqrt{-1}k\theta^\prime}e^{-\alpha \rho\left|\theta-\theta^\prime\right|}\rho d\theta^\prime 
&= \frac{\alpha}{2\sigma\ol{n}\left(1-e^{-\alpha \rho\pi}\right)}
\frac{2\alpha\rho^2\left(1-(-1)^ke^{-\alpha \rho\pi}\right)}{k^2+\alpha^2\rho^2}e^{\sqrt{-1}k\theta} \\
&= \frac{1}{\sigma\ol{n}}Z_k e^{\sqrt{-1}k\theta},
\end{aligned}
\]
where we also use \eqref{hG}. The next proposition about the transport index is quite useful.\footnote{The transport index $Z_k$ plays essentially the same role as the variable designated as $Z$ in \citet[Chapters 4, 5, and 6]{FujiKrugVenab}. They call $Z$ ``{\it a sort of index of trade cost}'' (p.~57). This variable is also useful in analyzing other NEG models on a one-dimensional circle, as in \citet{Ohtake2023cont} and \citet{Ohtake2025agriculture}.} For the proof, see Subsection \ref{subsec:thZ01}.
{\vspace{2mm}}
\begin{prop}\label{th:Z01}
The transport index $Z_k$ defined by \eqref{Zk} is monotonically increasing in $\alpha \rho\geq 0$. Specifically,
\[
\begin{aligned}
&\lim_{\alpha \rho \to 0} Z_k = 0,\\
&\lim_{\alpha \rho \to \infty} Z_k = 1,
\end{aligned}
\]
hold.
\end{prop}
In \eqref{fc}, substituting the first equation into the second one, and using \eqref{hn} and \eqref{hm}, we can express the second and the third equations as

\begin{equation}\label{simeqwhatGhat}
\begin{bmatrix}
1-\frac{\mu}{\sigma}Z_k & -\frac{\sigma-1}{\sigma}Z_k \\[3mm]
-Z_k & 1
\end{bmatrix}
\begin{bmatrix}
\hat{w}_k \\[3mm]
\frac{\hat{G}_k}{\ol{G}}
\end{bmatrix}
=
2\pi \rho
\begin{bmatrix}
\frac{1}{\sigma}\hat{n}_k - \frac{1}{\sigma}\left(1-\mu Z_k\right)\hat{m}_k\\[3mm]
-\frac{1}{\sigma-1}Z_k\hat{n}_k.
\end{bmatrix}
\end{equation}
Let $\delta(Z_k)$ be the determinant of the $2\times 2$ matrix in the left-hand side of \eqref{simeqwhatGhat}:
\begin{equation}\label{deltaZk}
\delta(Z_k) = 1-\frac{\mu}{\sigma}Z_k -\frac{\sigma-1}{\sigma}Z_k^2.
\end{equation}
It is easily verified that
\begin{equation}\label{delta>0}
\delta(Z) >0
\end{equation}
for all $Z\in [0,1]$ as a quadratic function of $Z$. Then, we can solve \eqref{simeqwhatGhat} as
\begin{equation}\label{slvdwhGh}
\begin{aligned}
\begin{bmatrix}
\hat{w}_k \\[3mm]
\frac{\hat{G}_k}{\ol{G}}
\end{bmatrix}
&= \frac{2\pi \rho}{\sigma\delta(Z_k)}
\begin{bmatrix}
\left(1 - Z_k^2\right)\hat{n}_k - \left(1 - \mu Z_k\right)\hat{m}_k\\[3mm]
\left(-\frac{Z_k}{\sigma-1} + \frac{\mu Z_k^2}{\sigma-1}\right)\hat{n}_k - \left(Z_k - \mu Z_k^2\right)\hat{m}_k
\end{bmatrix}
\end{aligned}
\end{equation}
Substituting \eqref{slvdwhGh} into the fifth equation of \eqref{fc}, we have
\begin{equation}\label{slvdomh}
\hat{\omega}_k 
= \frac{2\pi \rho\ol{G}^{-\mu}}{\sigma}\left[A\hat{n}_k + B\hat{m}_k\right],
\end{equation}
where
\begin{align}
A&:= 
 \frac{1}{\delta(Z_k)} \left[1 + \frac{\mu}{\sigma-1}Z_k - \left(1+\frac{\mu^2}{\sigma-1}\right)Z_k^2\right], \label{defA}\\
B&:= 
\frac{1}{\delta(Z_k)} \left[-\left(\mu Z_k-1\right)^2\right]. \label{defB}
\end{align}
Since \eqref{delta>0} holds, it is easy from \eqref{defA} and \eqref{defB} to see that 
\begin{equation}\label{B<0}
A > 0,~B< 0
\end{equation}
for any $Z\in[0, 1]$. Substituting \eqref{slvdomh} into the fourth equation of \eqref{fc}, we have
\begin{equation}\label{slvdeta}
\hat{\eta}_k = \frac{\mu 2\pi \rho}{\sigma}\ol{G}^{-\mu} \left[\left(\frac{A}{\sigma}-1\right)\hat{n}_k + \left(\frac{B}{\sigma} + 1\right)\hat{m}_k\right].
\end{equation}
By \eqref{slvdomh} and \eqref{slvdeta}, we see that the sixth and seventh equations in \eqref{fc} become
\[
\begin{aligned}
\frac{d}{dt}
\left[
\begin{array}{c}
\hat{n}_k\\
\hat{m}_k
\end{array}
\right]
&=
\frac{\ol{G}^{-\mu}}{\sigma}\left[
\begin{array}{cc}
v_n\mu\left(\frac{A}{\sigma}-1\right) & v_n\mu\left(\frac{B}{\sigma}+1\right) \\[3mm]
v_m A & v_m B
\end{array}
\right] 
\left[
\begin{array}{c}
\hat{n}_k \\[3mm]
\hat{m}_k
\end{array}
\right].
\end{aligned}
\]

{\subsection{Analysis on eigenvalues}}

By the well-known result in the theory of dynamical systems,\footnote{For example, \citet[Chapter 5, p.92]{HirSm74}} if the real parts of all the eigenvalues of the matrix 
\begin{equation}\label{Mat}
\frac{\ol{G}^{-\mu}}{\sigma}
\left[
\begin{array}{cc}
v_n\mu\left(\frac{A}{\sigma}-1\right) & v_n\mu\left(\frac{B}{\sigma}+1\right) \\[3mm]
v_m A & v_m B
\end{array}
\right].
\end{equation}
are negative, then $n_k(t)\to 0$ and $m_k(t)\to 0$ as $t\to\infty$. Hence, the $k$-th mode is {\it stable}. On the other hand, if there is at least one eigenvalue having a positive real part, then $n_k(t)\to \infty$ or $m_k(t)\to \infty$, i.e., the $k$-th mode is {\it unstable}.

Figure \ref{fig:eigenvals} plots the maximal real part of the eigenvalues of \eqref{Mat} as a function of $\tau\geq 0$ for each frequency $k=1$, $2$, $3$, $4$, $5$, and $6$ under \eqref{nbh}.  The parameters are set to $\mu=0.6$, $\sigma=5.0$, $v_n=1.0$, $v_m=1.0$, and $\rho=1.0$. The Python code used to generate Figure \ref{fig:eigenvals} is available on GitHub \url{https://github.com/k-ohtake/dual-migration}.

As shown in Figure \ref{fig:eigenvals}, there exists a critical point $\tau^*_k$ where the maximal real part equals zero for each frequency $k$. It is easy to verify that the matrix \eqref{Mat} is the same for $k$ and $-k$ and thus $\tau^*_k=\tau^*_{-k}$ holds. The maximal real part of the eigenvalues takes positive values for $\tau\in(0, \tau^*_k)$, and negative values for $\tau\in(\tau^*_k,\infty)$. Therefore, for any frequencies $k$ and $k^\prime$ satisfying $|k|<|k^\prime|$, as transport costs decrease, the $k^\prime$-th mode becomes unstable before the $k$-th mode (For example, in Figure \ref{fig:eigenvals}, the sixth mode is unstable when $\tau=5$ but the fifth mode is still stable.).

\begin{figure}[H]
\centering
\includegraphics[width=10cm]{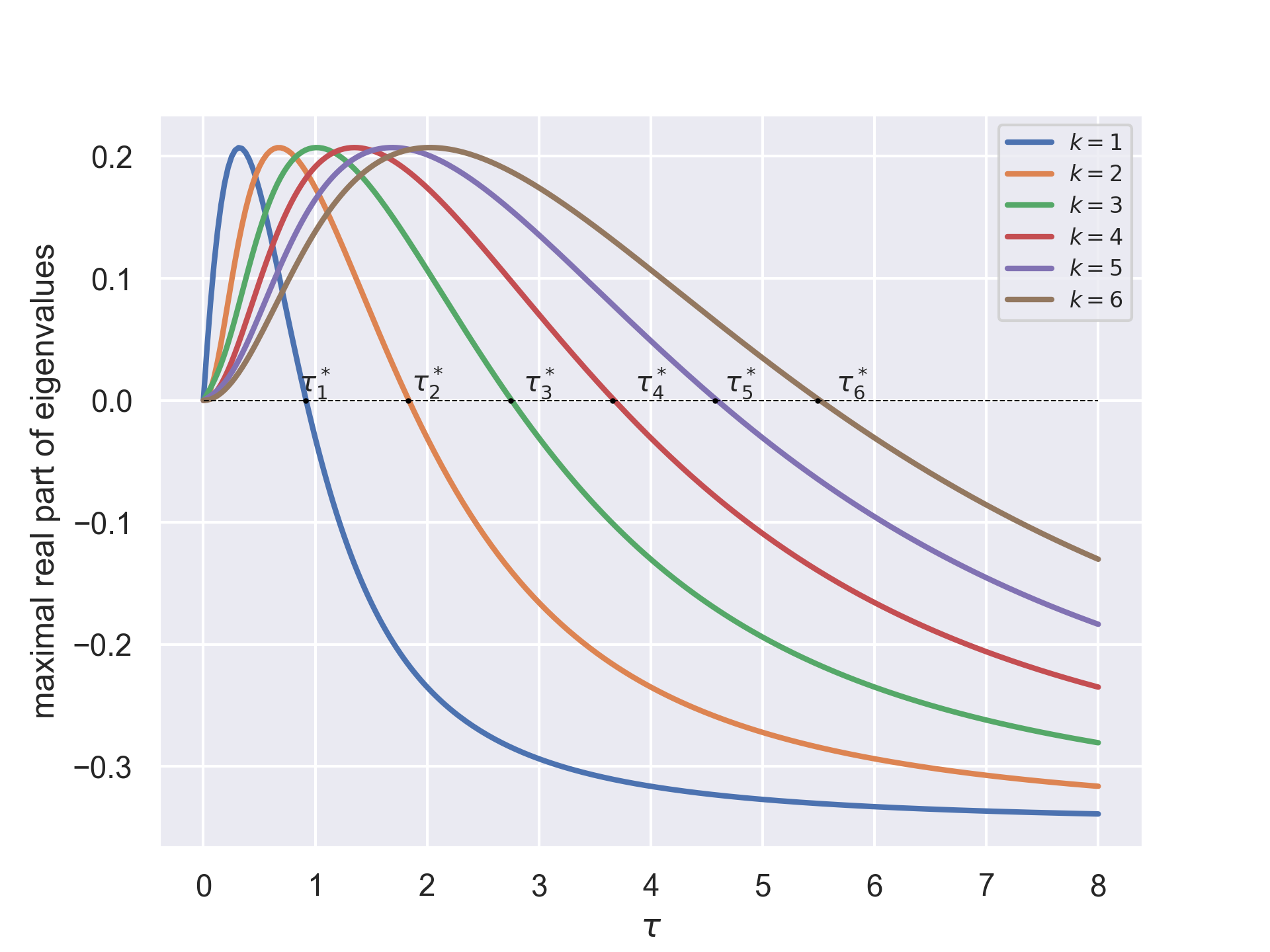}
\caption{The maximal real part of the eigenvalues of the matrix \protect\eqref{Mat} when parameter values are $\mu=0.6$, $\sigma=5.0$, $v_n=1.0$, $v_m=1.0$, and $\rho=1.0$.}\label{fig:eigenvals}
\end{figure}

The following proposition provides a sufficient condition for the critical point always to exist. See Subsection \ref{subsec:existence of cp} for the proof.

\begin{prop}\label{th: existence of critical point}
Suppose that 
\begin{equation}\label{nbh}
\frac{1}{1-\mu} < \sigma
\end{equation}
holds. Then, for any $k\neq 0$, there exists a critical point $\tau^*_k>0$.
\end{prop}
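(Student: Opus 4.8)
The plan is to reduce the statement to a sign analysis of the trace and determinant of the $2\times2$ matrix \eqref{Mat}, regarded as functions of the single scalar $Z:=Z_k$. Since that matrix is real $2\times2$, both eigenvalues have negative real part iff its trace is negative \emph{and} its determinant positive; if instead the determinant is negative the eigenvalues are real with opposite signs, so the maximal real part, which I will write $\lambda_{\max}(\tau)$, is strictly positive. By Lemma~\ref{th:Z01}, as $\tau$ runs over $[0,\infty)$ the quantity $\alpha\rho=(\sigma-1)\rho\tau$ runs over $[0,\infty)$ and $Z_k$ increases continuously from $0$ (at $\tau=0$) toward $1$ (as $\tau\to\infty$); so it suffices to track $\operatorname{tr}$ and $\det$ as $Z$ moves across $[0,1)$.

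A direct computation from \eqref{Mat}, \eqref{defA}, \eqref{defB} gives, up to a positive prefactor (a power of $\ol{G}^{-\mu}/\sigma$), that $\operatorname{tr}\propto v_n\mu\bigl(\tfrac{A}{\sigma}-1\bigr)+v_mB$ and, after the cross terms $\pm AB/\sigma$ cancel, $\det\propto -v_nv_m\mu(A+B)$. Placing $A+B$ over the common denominator $\delta(Z_k)>0$ (cf.\ \eqref{delta>0}) and simplifying yields
\[
A+B=\frac{Z_k}{(\sigma-1)\,\delta(Z_k)}\Bigl[\mu(2\sigma-1)-Z_k\bigl(\sigma-1+\sigma\mu^2\bigr)\Bigr],
\]
so $\det>0$ precisely when $Z_k>Z^{\flat}:=\mu(2\sigma-1)/(\sigma-1+\sigma\mu^2)$. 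The role of \eqref{nbh} is the identity $\sigma-1+\sigma\mu^2-\mu(2\sigma-1)=(1-\mu)\bigl(\sigma(1-\mu)-1\bigr)$, whose right-hand side is positive exactly under \eqref{nbh}; together with $\mu>0$ this places $Z^{\flat}$ strictly inside $(0,1)$.

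Existence of $\tau_k^*$ then follows by continuity. For $\tau$ small enough that $0<Z_k(\tau)<Z^{\flat}$, the determinant of \eqref{Mat} is negative, so it has a positive real eigenvalue and $\lambda_{\max}(\tau)>0$. At the opposite end, evaluating at $Z_k=1$ gives $A(1)=\mu\sigma/(\sigma-1)$ and $B(1)=-\sigma(1-\mu)$, whence there $\det>0$ and $\operatorname{tr}<0$ — the trace being negative because \eqref{nbh} forces $\sigma>1+\mu$, so $A(1)/\sigma-1<0$, while $B(1)<0$. Since $A,B$ are continuous on $[0,1]$ (as $\delta>0$ there), these two strict inequalities persist on some interval $(Z_0,1]$, and because $Z_k(\tau)\nearrow1$ we get $\lambda_{\max}(\tau)<0$ for all large $\tau$. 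As $\lambda_{\max}$ depends continuously on $\tau$ (eigenvalues depend continuously on the entries of \eqref{Mat}, which depend continuously on $\tau$ through $Z_k$), the intermediate value theorem produces $\tau_k^*>0$ with $\lambda_{\max}(\tau_k^*)=0$.

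The only genuinely delicate part is the algebra behind the factorization of $A+B$ and the identity $\sigma-1+\sigma\mu^2-\mu(2\sigma-1)=(1-\mu)(\sigma(1-\mu)-1)$, since this is where \eqref{nbh} enters and a sign slip would wreck the argument; the rest — the spectral dichotomy for $2\times2$ matrices, continuity, and the intermediate value theorem — is soft. It is also worth noting that the limiting case $\mu=0$ is degenerate (then $\det\equiv0$ and the firm dynamics are frozen), so the proposition should be read with $\mu\in(0,1)$.
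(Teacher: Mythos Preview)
Your argument is correct and shares the paper's core strategy: analyze the trace and determinant of \eqref{Mat} as functions of $Z_k$, locate the sign change of $\det$ at the threshold $Z^\flat=\mu(2\sigma-1)/\bigl(\sigma-1+\sigma\mu^2\bigr)$ (this is the paper's $Z^*$), and use \eqref{nbh} via the factorization $(1-\mu)\bigl(\sigma(1-\mu)-1\bigr)$ to place $Z^\flat$ in $(0,1)$. The one substantive difference is how you handle the trace. The paper devotes a separate lemma to proving $A/\sigma-1<0$ on \emph{all} of $[0,1]$, which together with $B<0$ gives $\operatorname{tr}<0$ globally and hence identifies the critical point exactly as the $\tau$ solving $Z_k(\tau)=Z^*$. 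You instead evaluate only at $Z=1$, check $A(1)/\sigma-1=\mu/(\sigma-1)-1<0$ under \eqref{nbh}, and invoke continuity to get $\operatorname{tr}<0$ and $\det>0$ on a neighborhood $(Z_0,1]$, then finish with the intermediate value theorem. This is more economical for the bare existence statement, at the cost of not pinning down $\tau_k^*$ uniquely or locating it at $Z_k=Z^*$ --- information the paper's version delivers and later uses in the proof of Theorem~\ref{th:criticalpoint_kk+2}.
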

\noindent
{\bf Remark}: The condition \eqref{nbh} is so-called the {\it assumption of no black holes} in \citet[p.59]{FujiKrugVenab}.\footnote{In general, the specific formula varies depending on models.} It requires either that $\mu$ be sufficiently small or that $\sigma$ be sufficiently large. Therefore, this proposition implies that the emergence of extremely strong agglomeration forces---which makes any mode always unstable---can be suppressed if either the expenditure share of manufactured goods is small, or if the preference for variety is sufficiently weak. 

\vspace{5mm}
The following proposition states that the critical point $\tau_k^*$ increases linearly in even numbers of $|k|=2l$ ($l\in\mathbb{N}$). In other words, no matter how high transport costs are, modes with sufficiently large absolute frequencies become unstable (recall that the $k$-th mode is unstable on the interval $(0,\tau^*_k)$), and thus, the homogeneous stationary solution is always unstable in the sense that unstable eigenmodes always exist. See Subsection \ref{subsec:criticalpoint_kk+2} for the proof of the proposition.
\begin{prop}\label{th:criticalpoint_kk+2}
Suppose that \eqref{nbh} holds. For $|k|=2l$ where $l\in\mathbb{N}$, the critical point $\tau_k^*$ is given by
\begin{equation}\label{tauk*explicit}
\tau_k^* = \frac{|k|}{(\sigma-1)\rho}\sqrt{\frac{Z^*}{1-Z^*}}>0,
\end{equation}
where
\[
Z^* := \frac{\mu(2\sigma-1)}{\sigma(1+\mu^2)-1}\in(0, 1).
\]
\end{prop}

\noindent
{\bf Remark}: The homogeneous stationary solution is always unstable, but for modes with sufficiently large $|k|$, the maximum eigenvalue is sufficiently small that it can be considered practically negligible. In fact, when $|k|\to\infty$, it is easy from \eqref{deltaZk}, \eqref{defA}, and \eqref{defB} to verify that $\delta(Z_k)\to 1$, $A\to 1$, $B\to -1$, and the maximal eigenvalue of \eqref{Mat} converges to $0$ under any values of transport costs.\footnote{In fact, the maximum root of the characteristic polynomial \eqref{charapoly} converges to zero.}

\vspace{1mm}
\noindent
{\bf Remark}: When $|k|$ is odd, it is not possible to find an explicit expression  for $\tau_k^*$ as in \eqref{tauk*explicit}. Nevertheless, it can be shown that $\tau_k^*$ is an increasing function of $|k|$ and that $\lim_{|k|\to\infty}\tau_k^*=\infty$. See Subection \ref{tauk*explicit-oddk} for the proof.

\section{Asymptotic behavior of solutions}\label{sec:simu}

\subsection{Analytical facts on stationary solutions}\label{subsec:analytical}
Here, we show analytical facts that hold regardless of whether the stationary solution to \eqref{0} is homogeneous or non-homogeneous. Let us denote stationary solutions, which are not necessarily homogeneous, by appending a tilde to each state variable of the system \eqref{0}. For example, $\tilde{n}$, $\tilde{m}$, $\tilde{\eta}$, and $\tilde{\omega}$ stand for the share function of firms, the share function of manufacturing workers, the real profit, and the real wage in a stationary solution, respectively. In addition, let us introduce the support of any non-negative function $f$ on $S$ as
\[
{\rm supp}(f):=\left\{x\in S\mid f(x)>0 \right\}.
\]

Then, the next proposition ---which might be called {\it co-location}--- holds for any stationary solutions to \eqref{0}.
\begin{prop}\label{prop:colocation}
For $\tilde{n}$ and $\tilde{m}$ in any stationary solutions to \eqref{0},
\begin{equation}\label{suppneqsuppm}
{\rm supp}(\tilde{n})={\rm supp}(\tilde{m})
\end{equation}
holds. 
\end{prop}
Proposition \ref{prop:colocation} can be verified as follows. From \eqref{rp2}, we have
\[
\tilde{\eta}(x)  = \frac{1}{\sigma-1} \frac{M\tilde{m}(x)}{N\tilde{n}(x)}\tilde{\omega}(x),
\]
and we immediately see that if $\tilde{m}(x)= 0$ on $x\in {\rm supp}(\tilde{n})$ then $\tilde{\eta}(x)=0$ on $x\in {\rm supp}(\tilde{n})$. This implies that firms must leave regions lacking manufacturing workers. Therefore, this contradicts the fact that $\tilde{n}$ is in a steady state. Also from \eqref{rp2}, we obtain 
\[
\tilde{\omega}(x)=(\sigma-1)\frac{N\tilde{n}(x)}{M\tilde{m}(x)}\tilde{\eta}(x).
\]
Threfore, if $\tilde{n}(x)=0$ on $x\in {\rm supp}(\tilde{m})$ then $\tilde{\omega}(x)=0$ on $x\in {\rm supp}(\tilde{m})$. This implies that manufacturing workers must leave regions lacking firms. Therefore, this contradicts the fact that $\tilde{m}$ is in a steady state. Thus, we obtain \eqref{suppneqsuppm}.

Furthermore, the following proposition  ---which might be called {\it proportionality}--- holds for any stationary solutions to \eqref{0}. See Subsection \ref{proof:neqm} for the proof.
\begin{prop}\label{prop:neqm}
For $\tilde{n}$ and $\tilde{m}$ in any stationary solutions to \eqref{0},
\[
\tilde{n}(x)=\tilde{m}(x),\hspace{3mm}\forall x\in {\rm supp}(\tilde{n})={\rm supp}(\tilde{m})
\]
holds.
\end{prop}
\noindent
{\bf Remark}: Note that these propositions hold regardless of whether the stationary solution is homogeneous or non-homogeneous. Interestingly, the shapes of the distributions of $n$ and $m$ in stationary solutions are determined to be identical. As the proof makes clear, this is due to the dynamics being the replicator equation.

\subsection{Numerical examples}\label{sbsec:numerical}
We numerically investigate the asymptotic behavior of solutions of \eqref{0}. See Subsection \ref{subsec:numerical_scheme} for the numerical scheme. As discussed below, a solution starting near the homogeneous stationary solution eventually converges to a nonuniform stationary solution. As an initial condition, we add small perturbations to both share functions $n$ and $m$, i.e., 
\[\left\{\begin{aligned}
&n_0(x) = \ol{n} + \Delta n(x), \\
&m_0(x)=\ol{m} + \Delta m(x),
\end{aligned}\right.
\]
where the small perturbations $\Delta n$ and $\Delta m$ satisfy that $\int_S\Delta n(x)dx=\int_S\Delta m(x)dx=0$.\footnote{These initial functions are also discretized appropriately in the actual computation. } Each of the small perturbations is randomly generated for each simulation\footnote{Therefore, initial condition generally varies with each simulation.} and thus generally $\Delta n\neq \Delta m$. A time-evolving numerical solution stops moving numerically and reaches an approximated stationary solution after sufficient time. The transport coefficient $\tau>0$ is considered to be a control parameter, and other parameters are set to $\mu=0.6$, $\sigma=5.0$, $v_n=1.0$, $v_m=1.0$, and $\rho=1.0$.

The following figures \ref{fig:tau0p5}-\ref{fig:tau2p6} show approximated stationary solutions thus obtained.\footnote{In the figures, the actual computed values are indicated by the dots. The dashed lines are just interpolation.}  In each figure, the top left shows the share function of firms, the top right shows the distribution of real profits, the bottom left shows the share function of manufacturing workers, and the bottom right shows the distribution of real wages. The Julia code for the simulation is available on GitHub \url{https://github.com/k-ohtake/dual-migration}.

Spiky distributions are observed, with the share functions taking extremely high values in some regions and almost zero in others. It is also observed that regions with large agglomerations of firms and manufacturing workers, which should be called ``cities'', enjoy higher real profits and real wages than regions without such agglomerations. Furthermore, it is shown that as transport costs decrease, the number of cities decreases.\footnote{Depending on randomly varying initial values, the number of cities in stationary solutions also varies within a certain range. However, the maximum number of possible cities is controlled by the values of $\tau$. These figures show the distributions having the maximum number of cities for each value of $\tau$, obtained by simulating several times with generally different initial values.} This property is common to a lot of models based on the CP model.\footnote{For example, studies on two-regional models by \citet{ForsOtta} and \citet{Pfl}; multi-regional model by \citet{GasCasCorr}, \citet{AkaTakaIke}, \citet{IkeAkaKon}, and \citet{TabuThis}; one-dimensional continuous periodic model by \citet[Chapter 6]{FujiKrugVenab}, \citet{OhtakeYagi_point}, and  \citet{Ohtake2023cont}.}

\newpage
\newgeometry{top=25truemm, bottom=35truemm} 
\begin{figure}[H]
 \begin{subfigure}{0.5\columnwidth}
  \centering
  \includegraphics[width=\columnwidth]{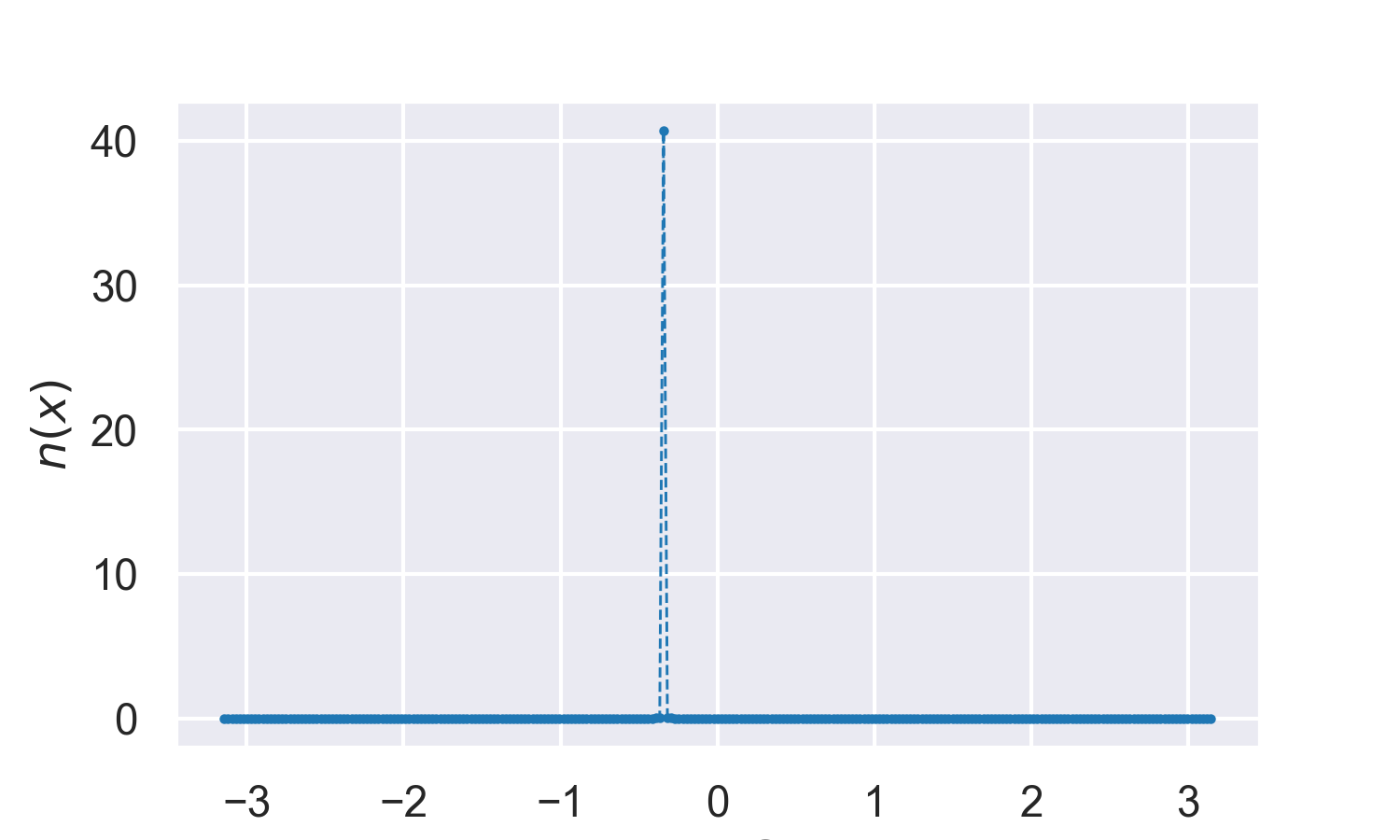}
  \caption{Firms}
 \end{subfigure}
 \begin{subfigure}{0.5\columnwidth}
  \centering
  \includegraphics[width=\columnwidth]{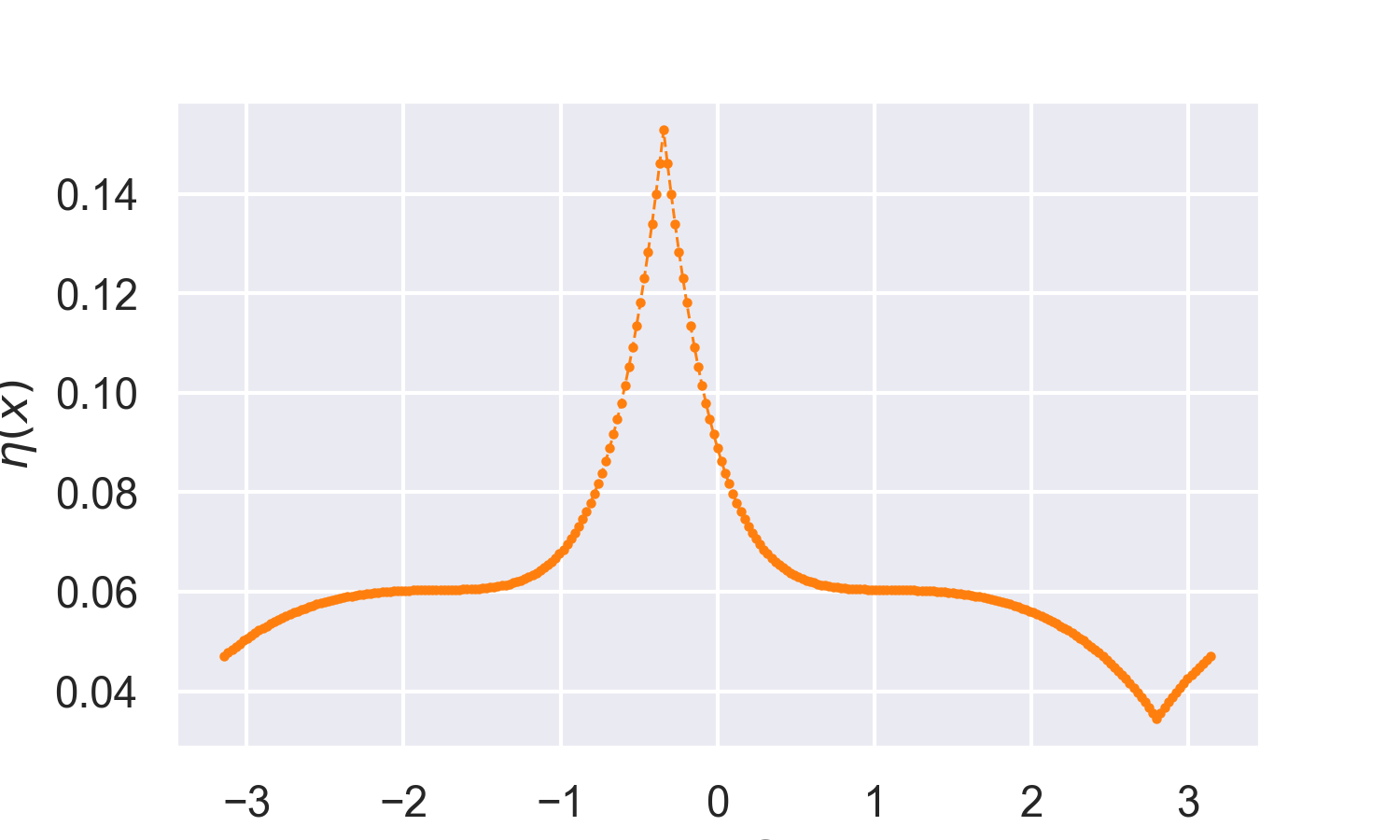}
  \caption{Real profits}
 \end{subfigure}
 \begin{subfigure}{0.5\columnwidth}
  \centering
  \includegraphics[width=\columnwidth]{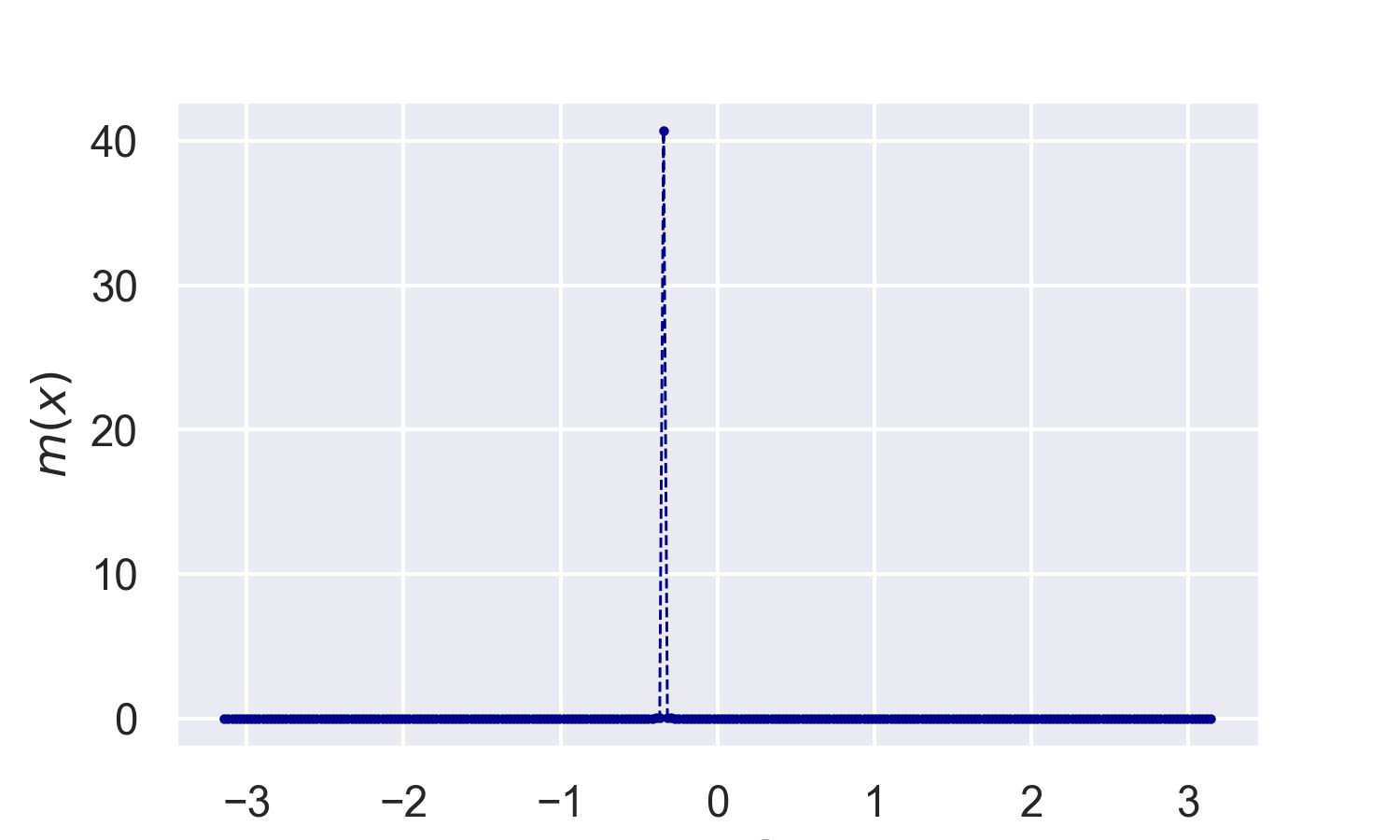}
  \caption{Manufacturing workers}
 \end{subfigure}
 \begin{subfigure}{0.5\columnwidth}
  \centering
  \includegraphics[width=\columnwidth]{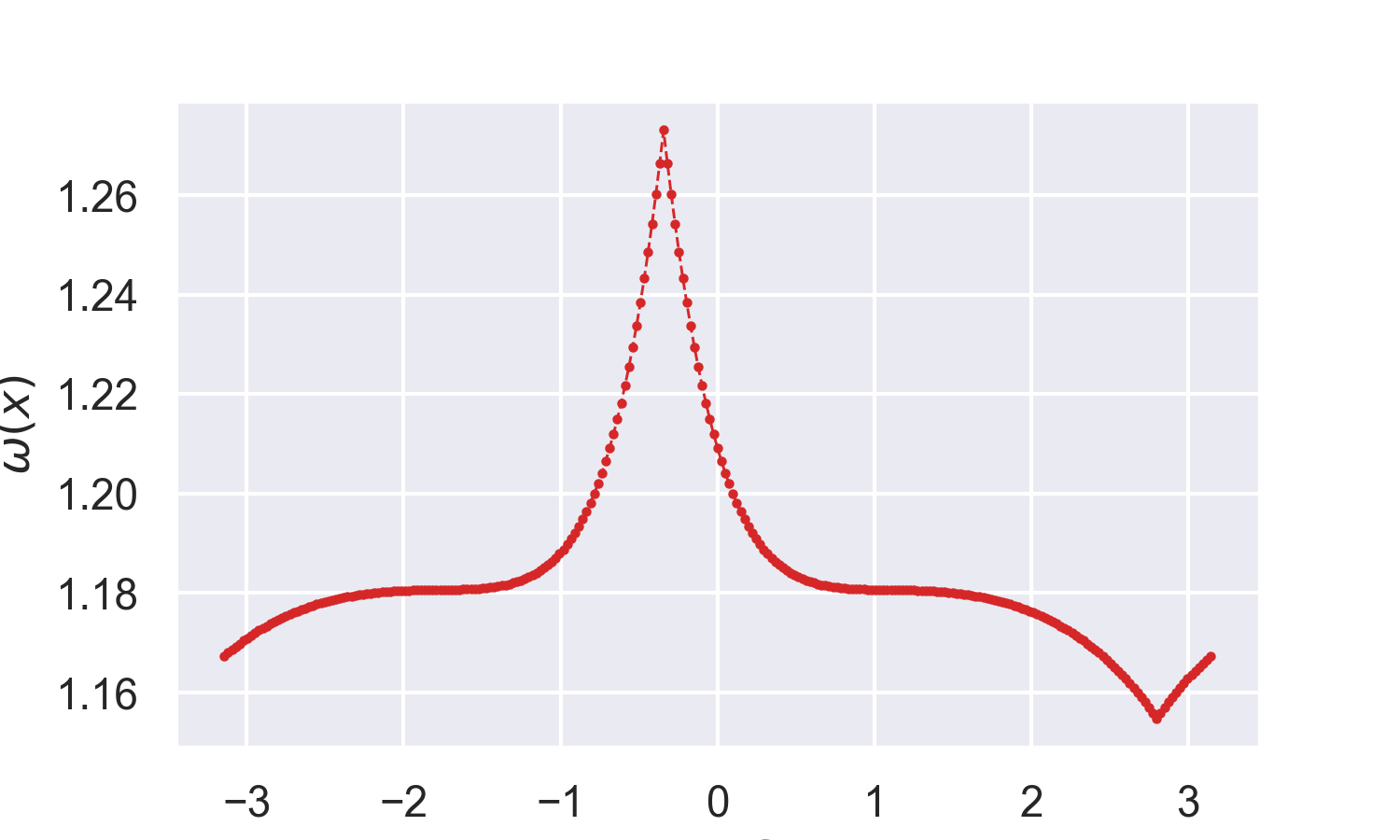}
  \caption{Real wages}
 \end{subfigure}
 \caption{A stationary solution for $\tau=0.5$ when other parameter values are $\mu=0.6$, $\sigma=5.0$, $v_n=1.0$, $v_m=1.0$, and $\rho=1.0$. A single city is formed.}
 \label{fig:tau0p5}
\end{figure}

\begin{figure}[H]
 \begin{subfigure}{0.5\columnwidth}
  \centering
  \includegraphics[width=\columnwidth]{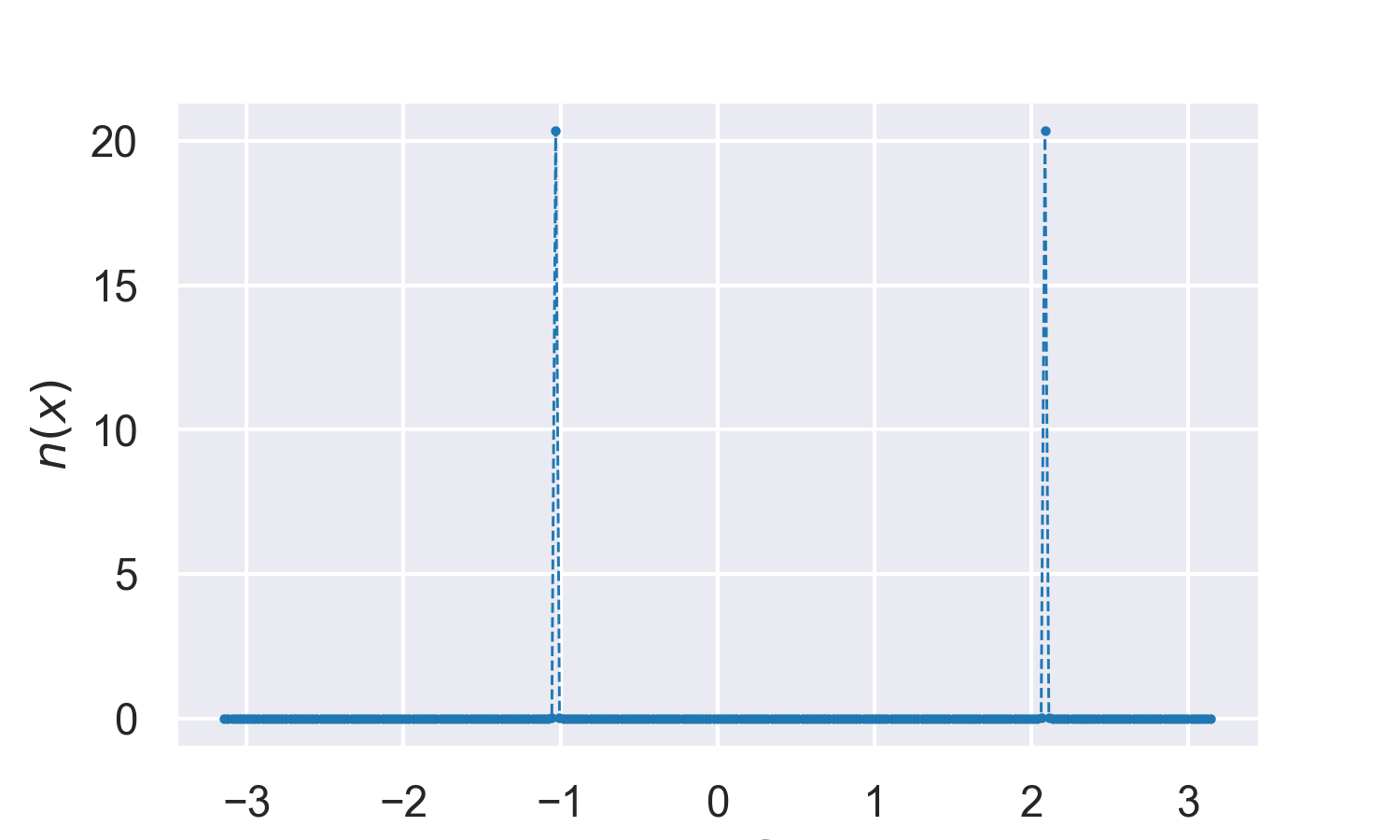}
  \caption{Firms}
 \end{subfigure}
 \begin{subfigure}{0.5\columnwidth}
  \centering
  \includegraphics[width=\columnwidth]{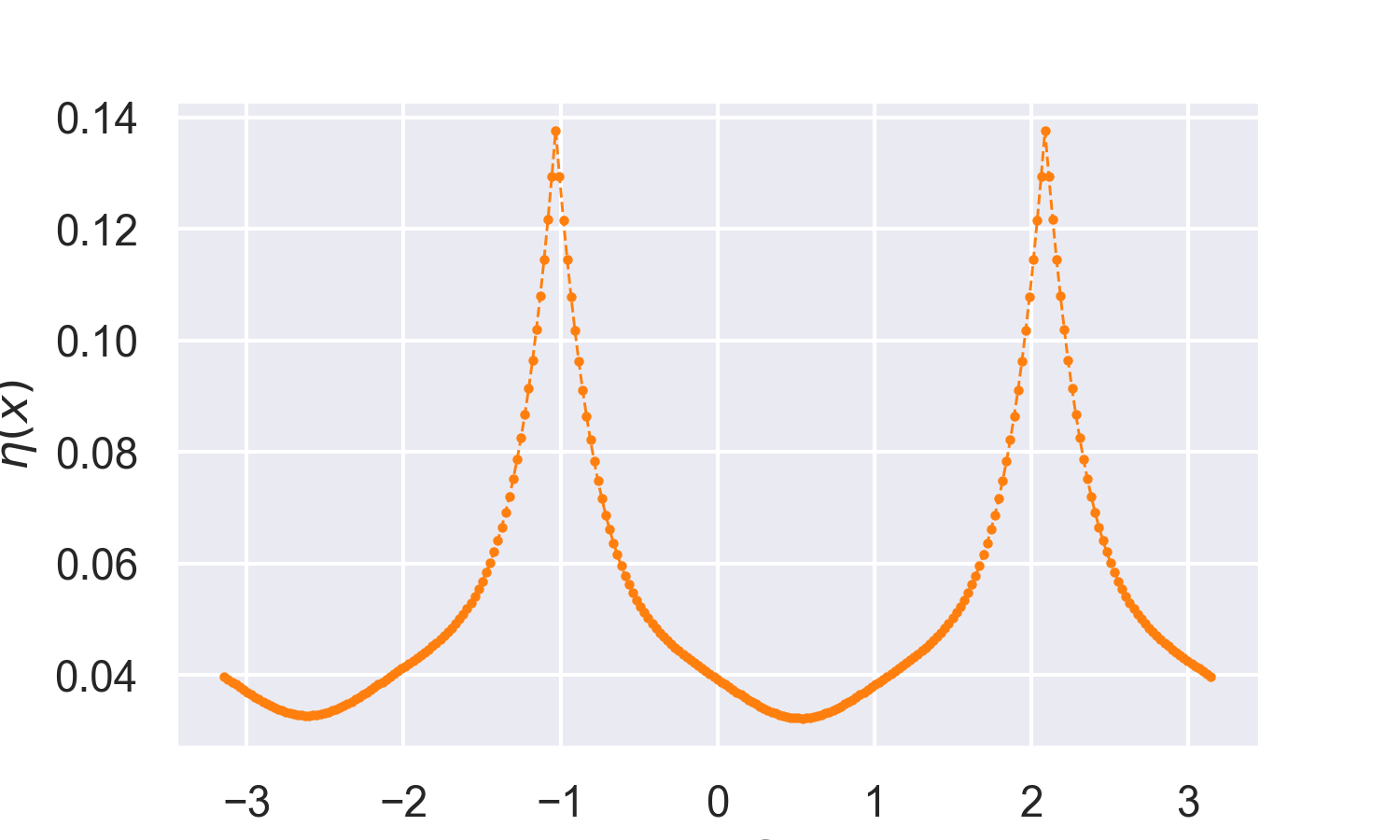}
  \caption{Real wages}
 \end{subfigure}
 \begin{subfigure}{0.5\columnwidth}
  \centering
  \includegraphics[width=\columnwidth]{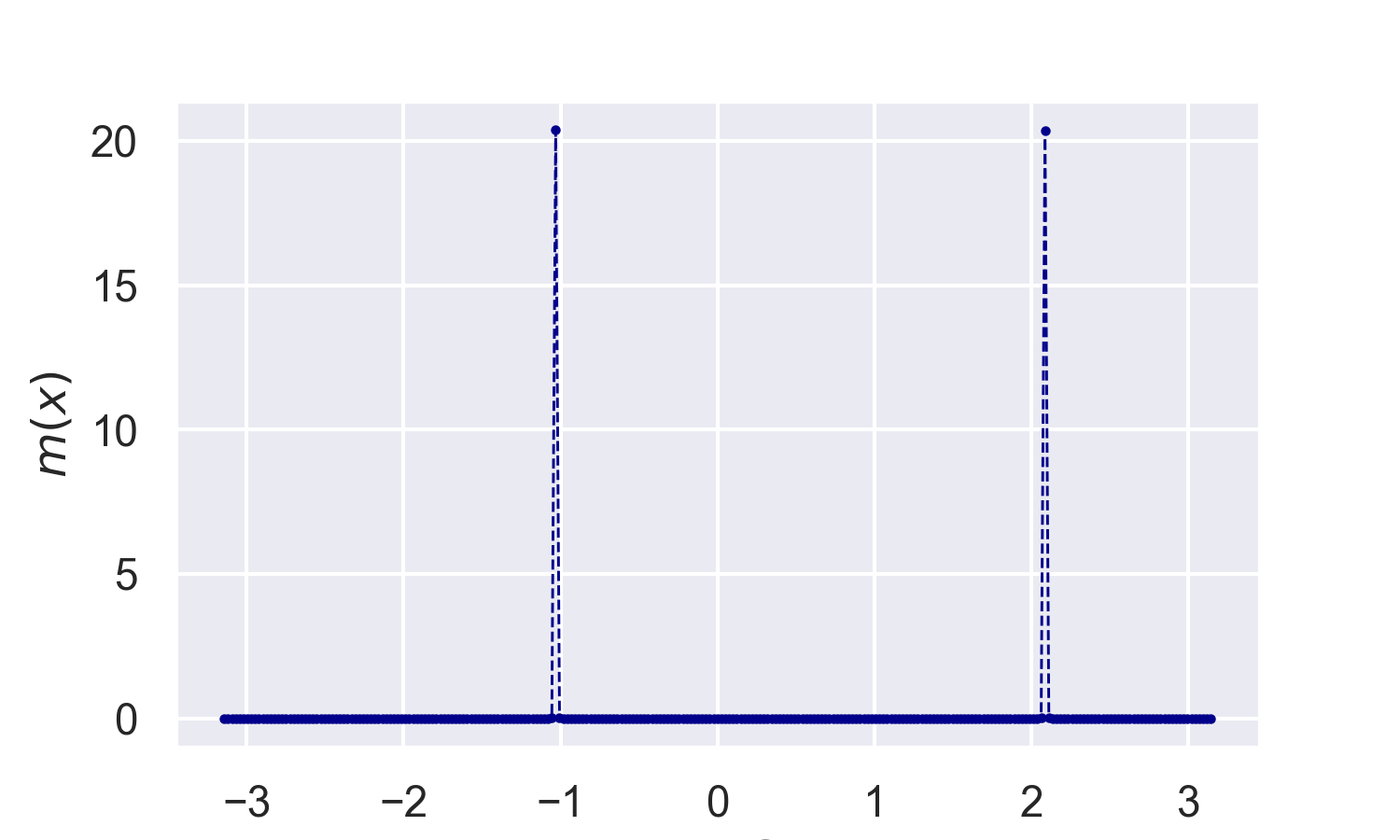}
  \caption{Manufacturing workers}
 \end{subfigure}
 \begin{subfigure}{0.5\columnwidth}
  \centering
  \includegraphics[width=\columnwidth]{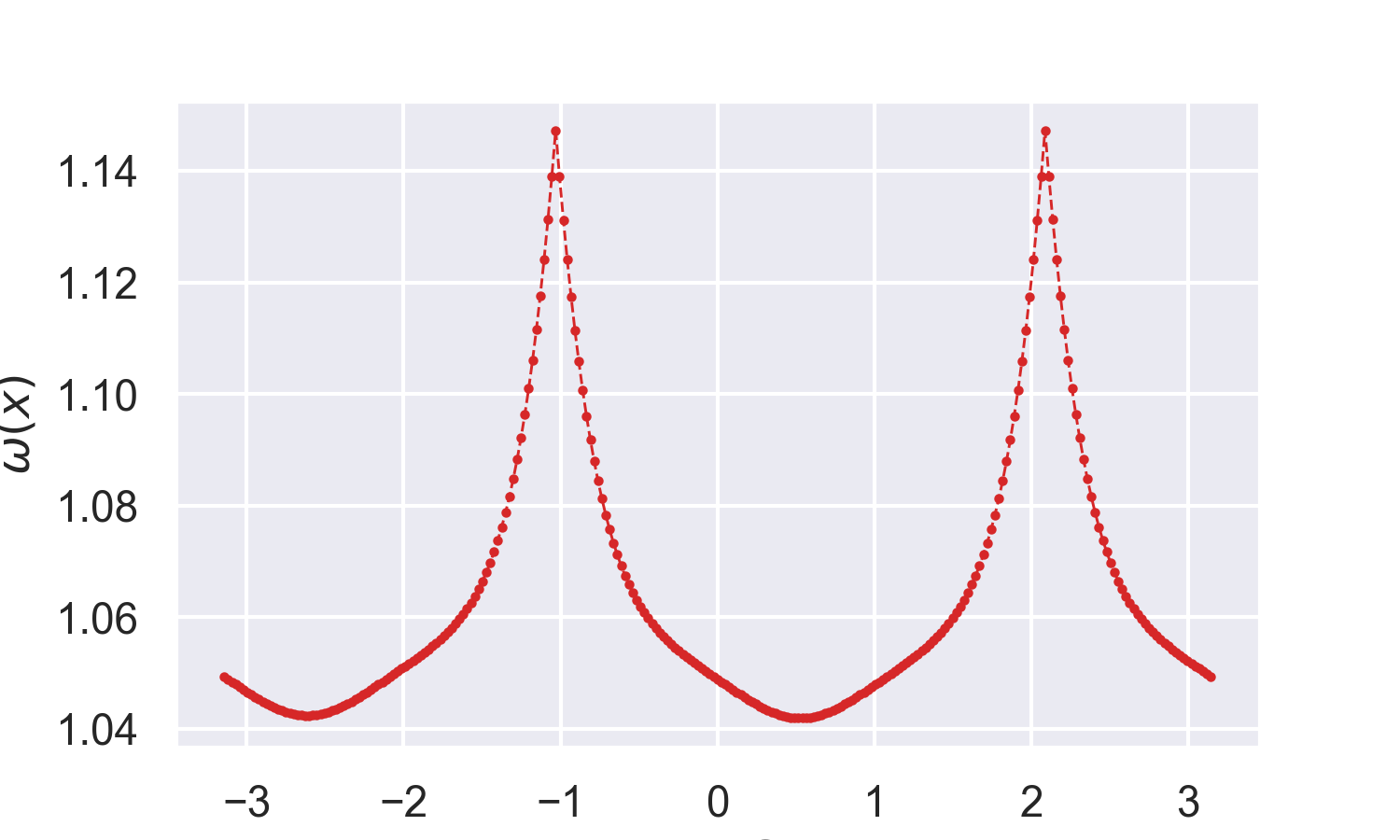}
  \caption{Real profits}
 \end{subfigure}
 \caption{A stationary solution for $\tau=0.7$ when other parameter values are $\mu=0.6$, $\sigma=5.0$, $v_n=1.0$, $v_m=1.0$, and $\rho=1.0$. Two cities are formed.} 
 \label{fig:tau0p7}
\end{figure}

\begin{figure}[H]
 \begin{subfigure}{0.5\columnwidth}
  \centering
  \includegraphics[width=\columnwidth]{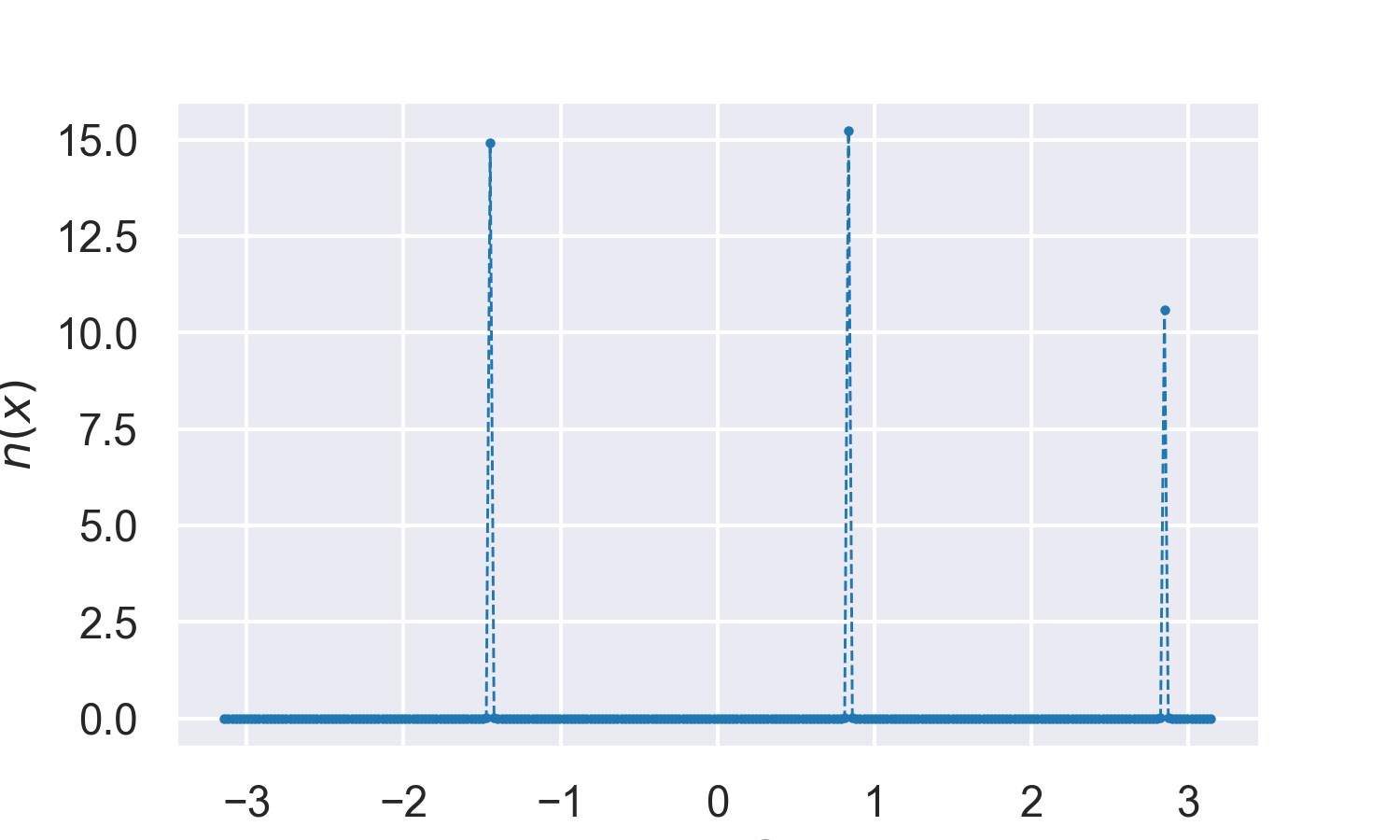}
  \caption{Firms}
 \end{subfigure}
 \begin{subfigure}{0.5\columnwidth}
  \centering
  \includegraphics[width=\columnwidth]{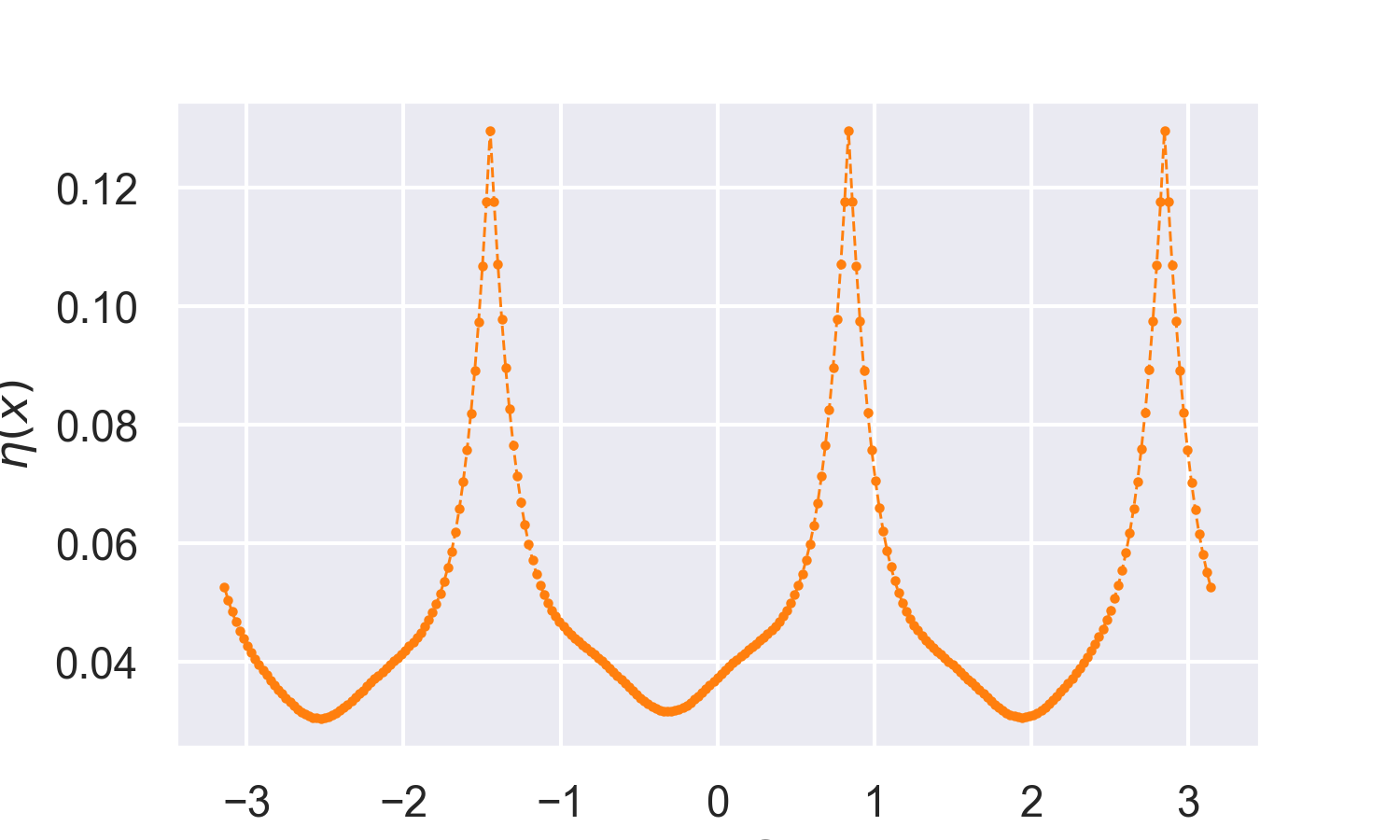}
  \caption{Real wages}
 \end{subfigure}
 \begin{subfigure}{0.5\columnwidth}
  \centering
  \includegraphics[width=\columnwidth]{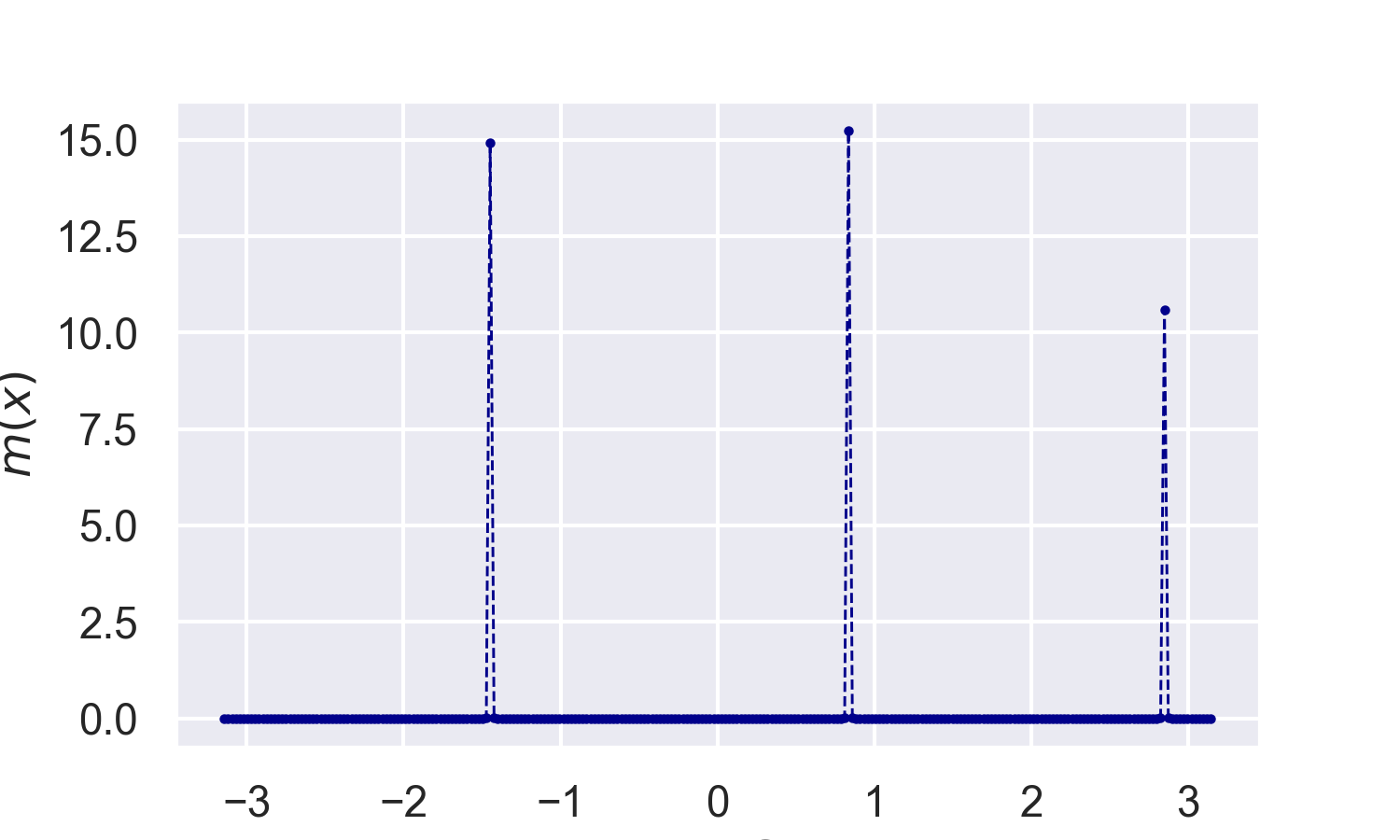}
  \caption{Manufacturing workers}
 \end{subfigure}
 \begin{subfigure}{0.5\columnwidth}
  \centering
  \includegraphics[width=\columnwidth]{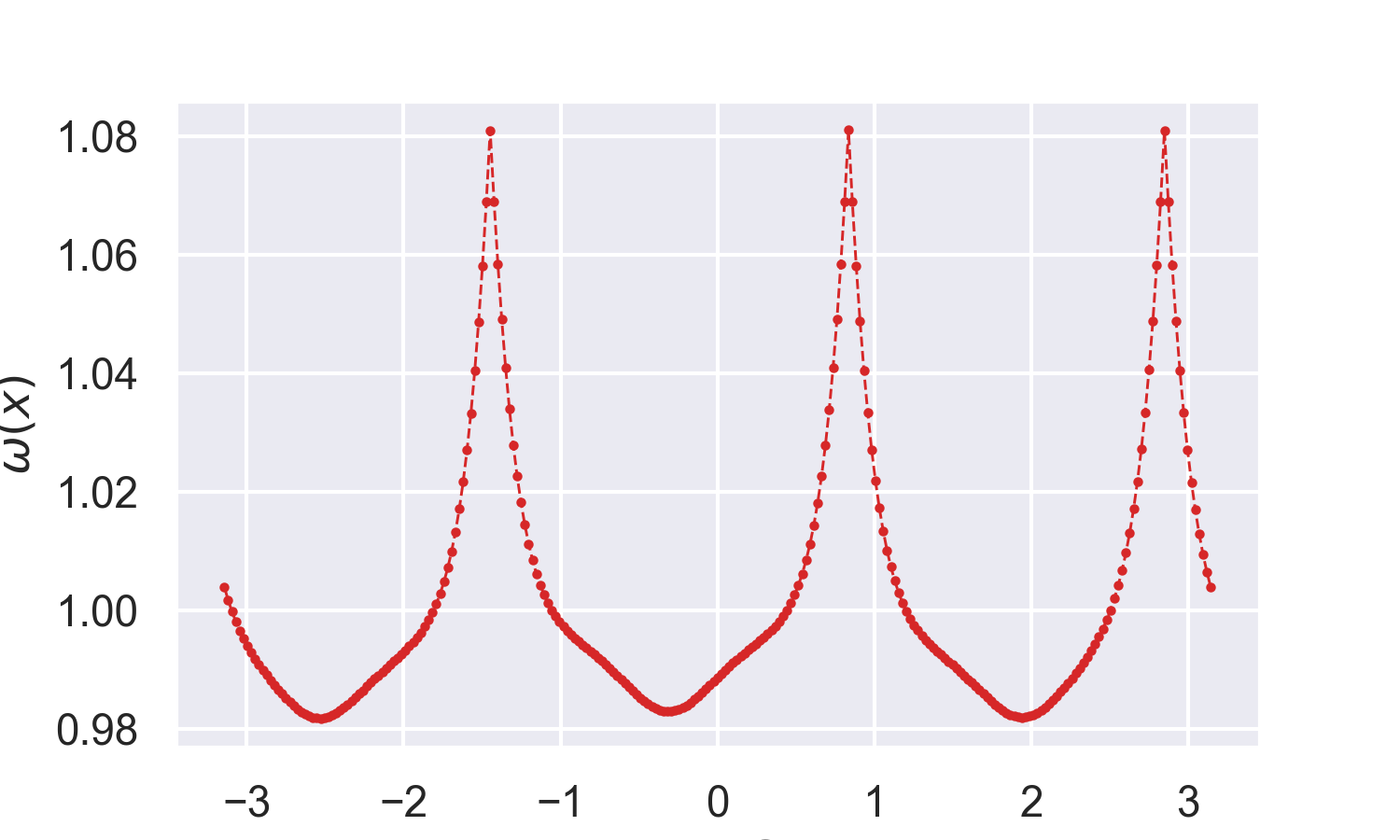}
  \caption{Real profits}
 \end{subfigure}
 \caption{A stationary solution for $\tau=1.1$ when other parameter values are $\mu=0.6$, $\sigma=5.0$, $v_n=1.0$, $v_m=1.0$, and $\rho=1.0$. Three cities are formed.}
 \label{fig:tau1p1}
\end{figure}

\vfill
\begin{figure}[H]
 \begin{subfigure}{0.5\columnwidth}
  \centering
  \includegraphics[width=\columnwidth]{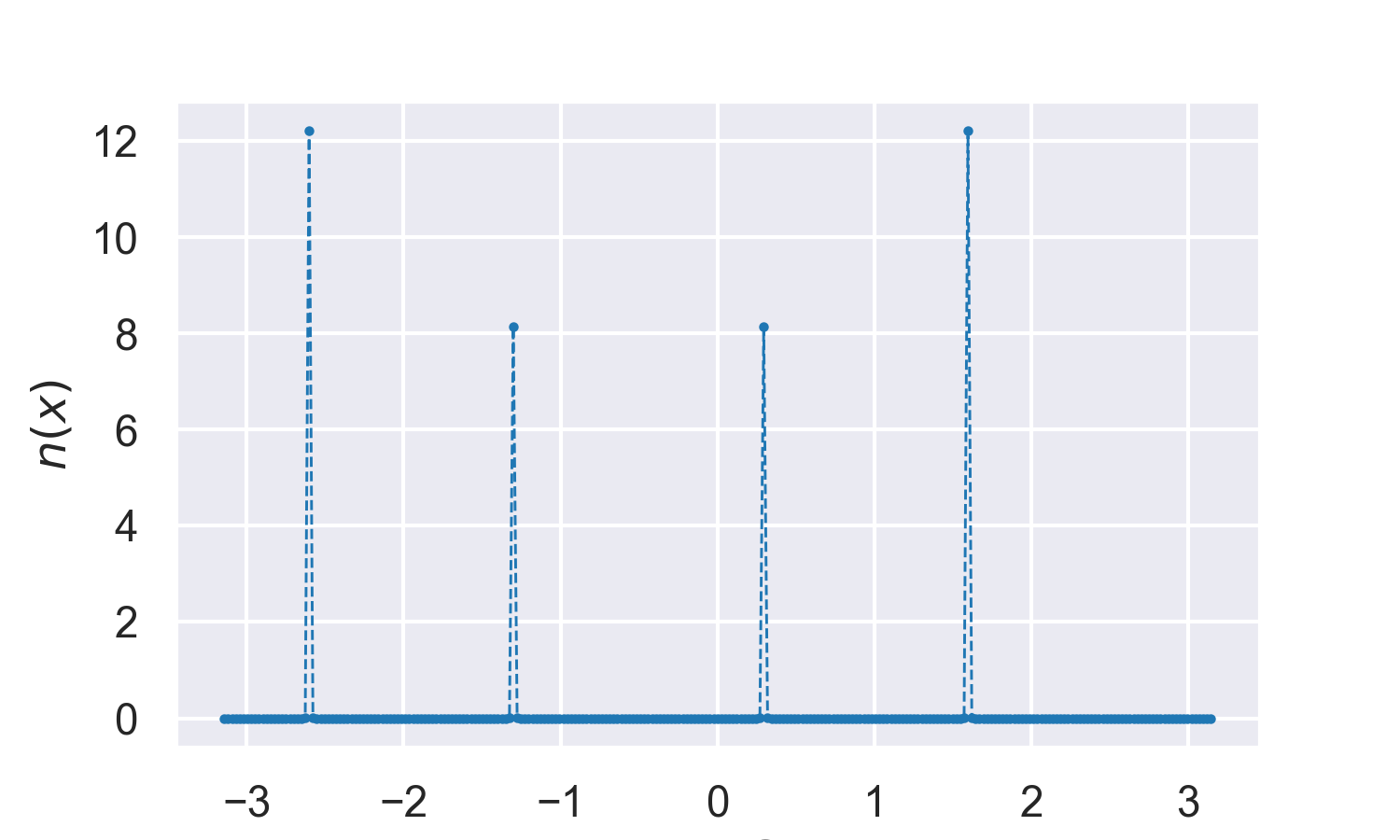}
  \caption{Firms}
 \end{subfigure}
 \begin{subfigure}{0.5\columnwidth}
  \centering
  \includegraphics[width=\columnwidth]{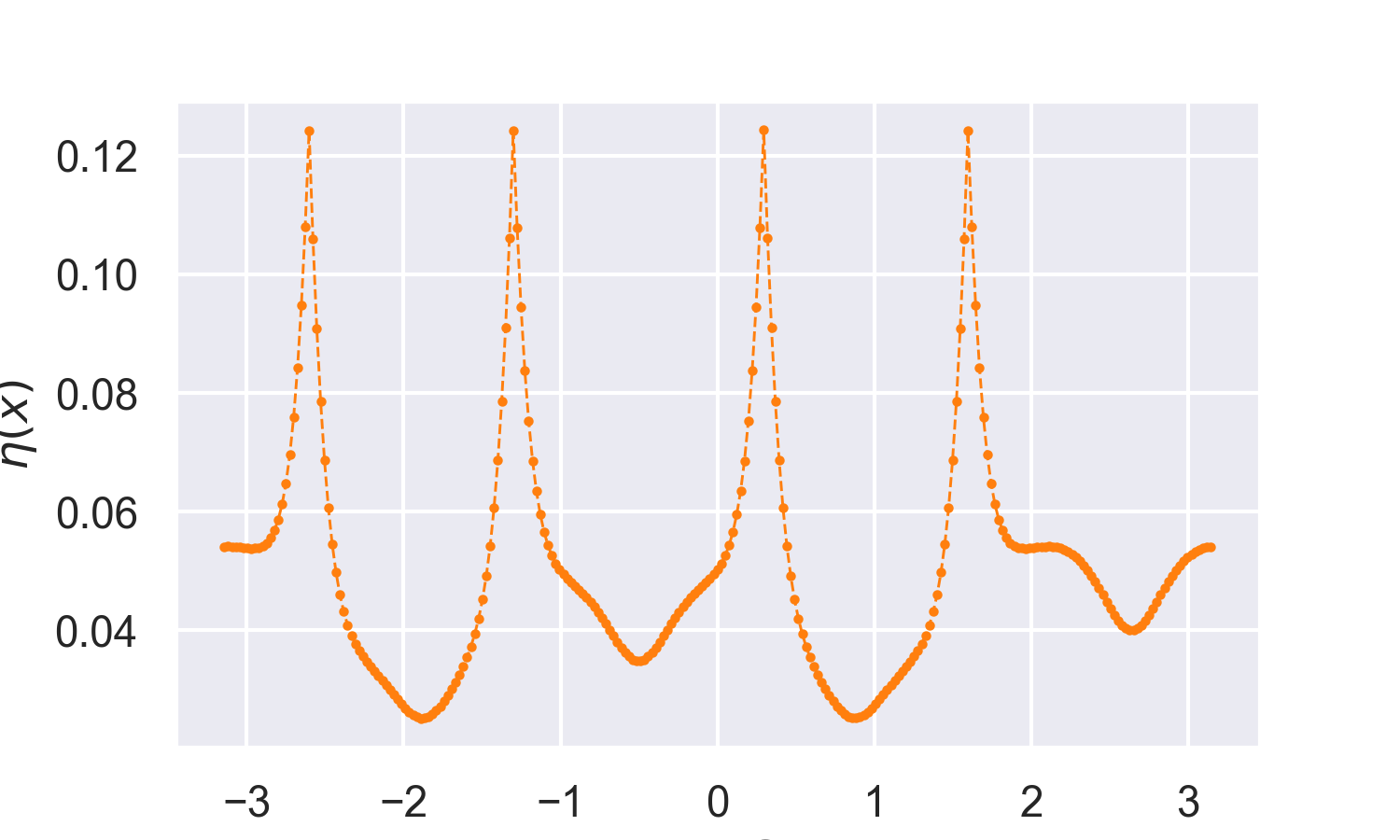}
  \caption{Real wages}
 \end{subfigure}
 \begin{subfigure}{0.5\columnwidth}
  \centering
  \includegraphics[width=\columnwidth]{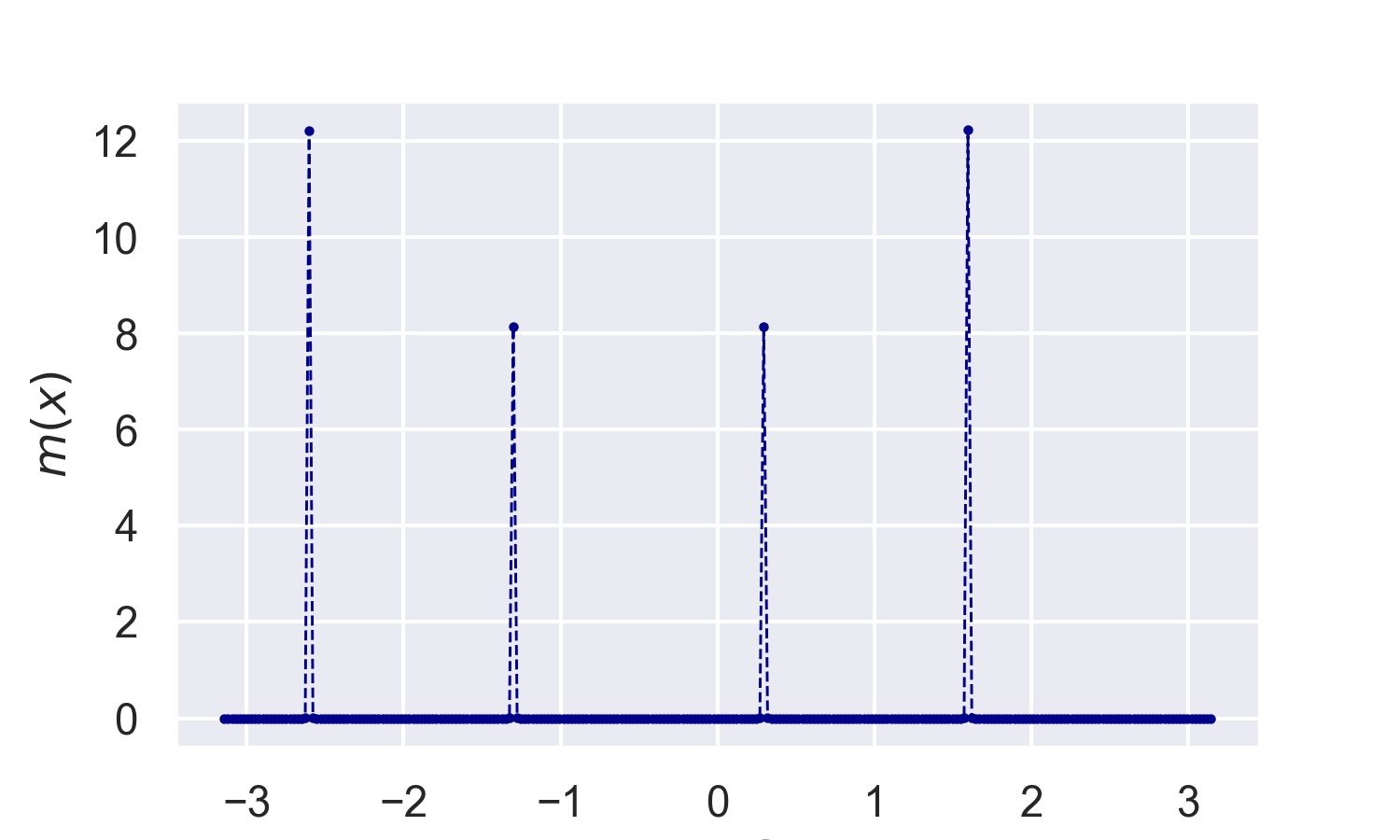}
  \caption{Manufacturing workers}
 \end{subfigure}
 \begin{subfigure}{0.5\columnwidth}
  \centering
  \includegraphics[width=\columnwidth]{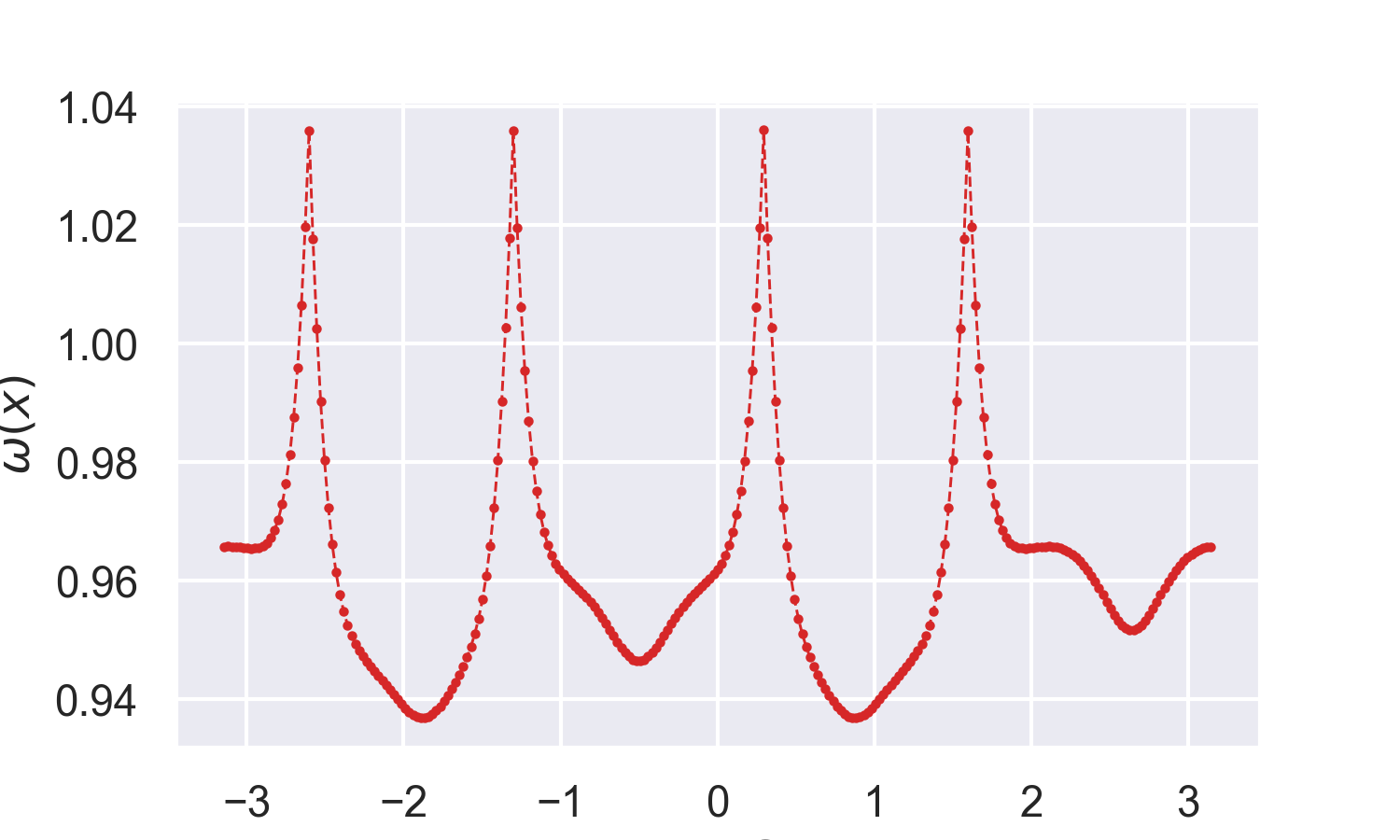}
  \caption{Real profits}
 \end{subfigure}
 \caption{A stationary solution for $\tau=1.7$ when other parameter values are $\mu=0.6$, $\sigma=5.0$, $v_n=1.0$, $v_m=1.0$, and $\rho=1.0$. Four cities are formed.}
 \label{fig:tau1p7}
\end{figure}
\vfill

\vfill
\begin{figure}[H]
 \begin{subfigure}{0.5\columnwidth}
  \centering
  \includegraphics[width=\columnwidth]{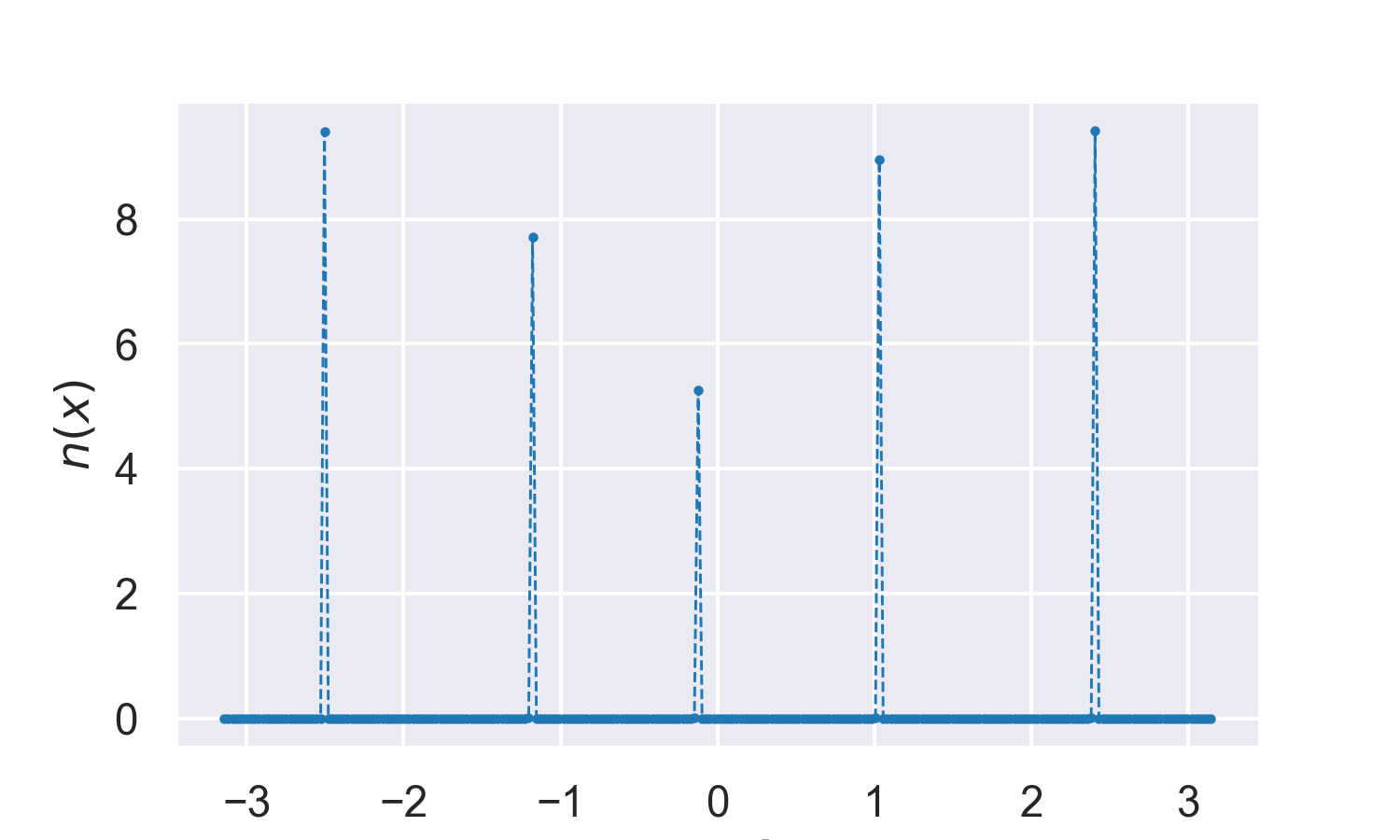}
  \caption{Firms}
 \end{subfigure}
 \begin{subfigure}{0.5\columnwidth}
  \centering
  \includegraphics[width=\columnwidth]{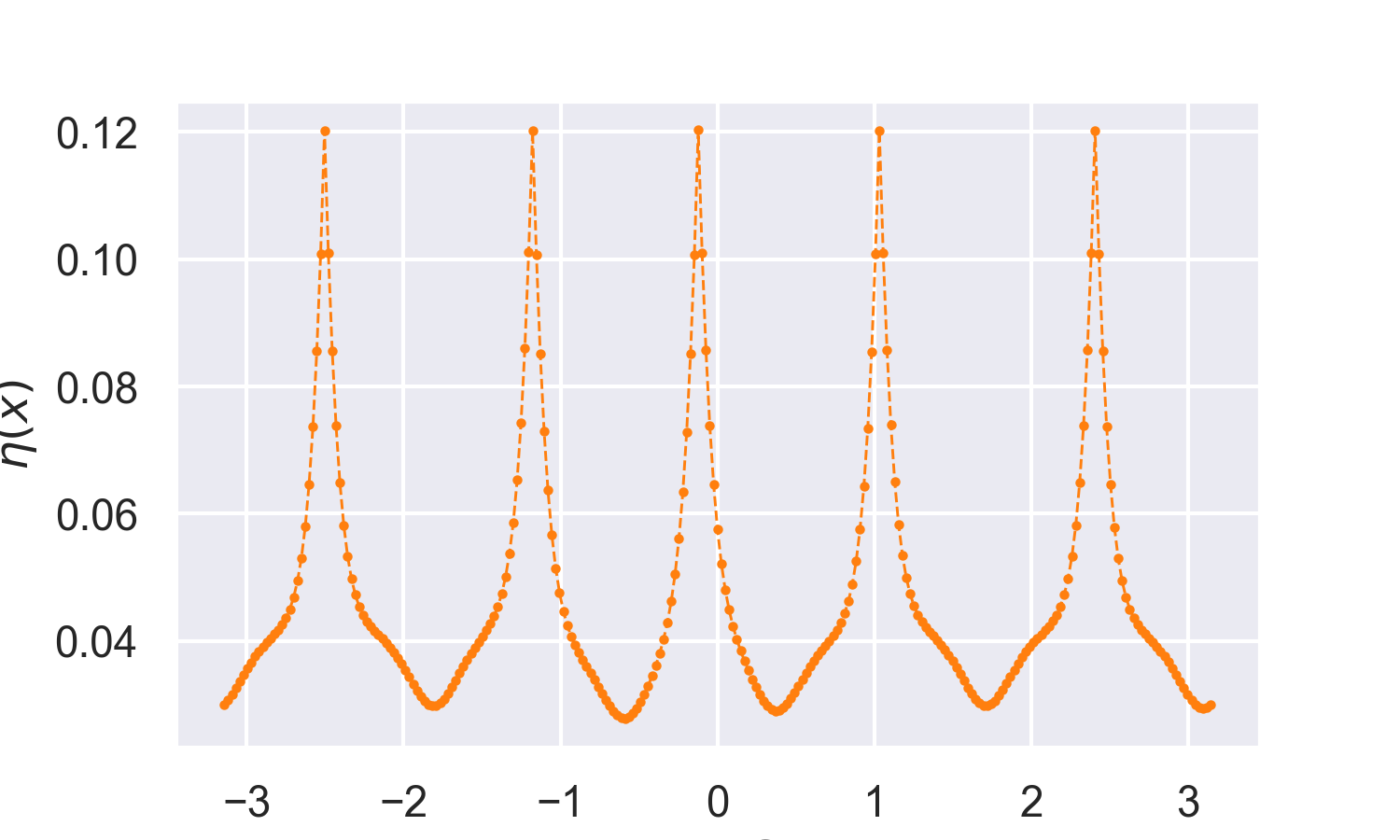}
  \caption{Real wages}
 \end{subfigure}
 \begin{subfigure}{0.5\columnwidth}
  \centering
  \includegraphics[width=\columnwidth]{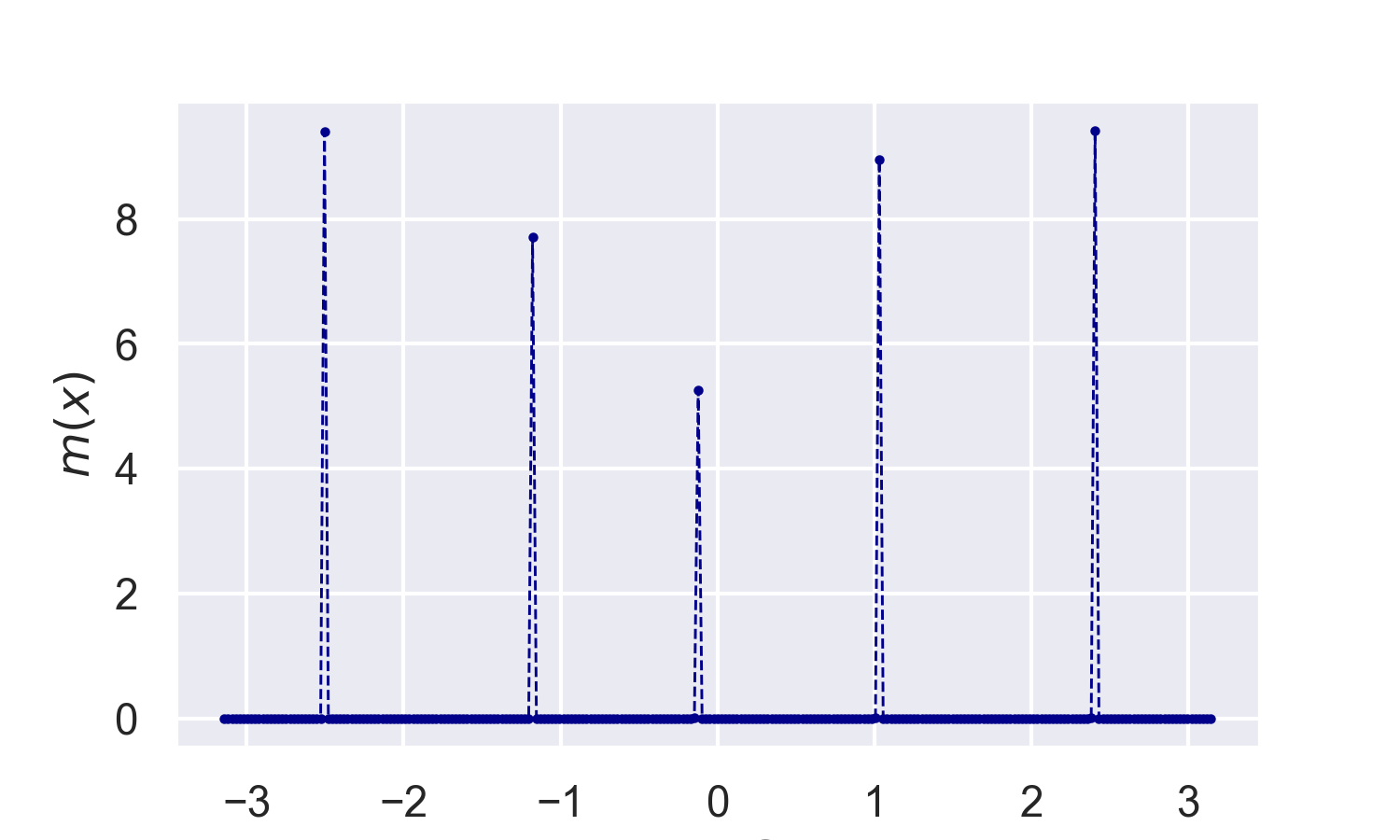}
  \caption{Manufacturing workers}
 \end{subfigure}
 \begin{subfigure}{0.5\columnwidth}
  \centering
  \includegraphics[width=\columnwidth]{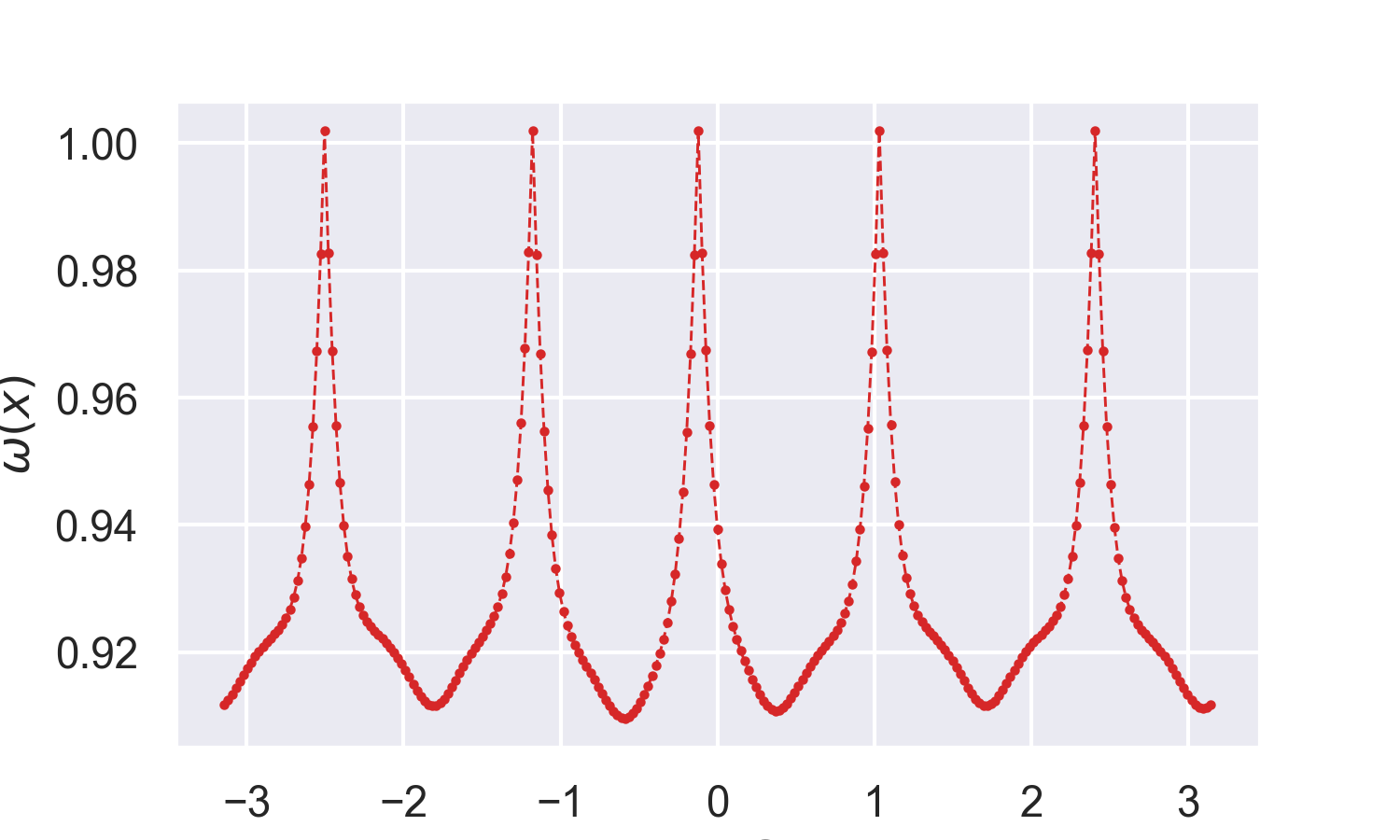}
  \caption{Real profits}
 \end{subfigure}
 \caption{A stationary solution for $\tau=2.0$ when other parameter values are $\mu=0.6$, $\sigma=5.0$, $v_n=1.0$, $v_m=1.0$, and $\rho=1.0$. Five cities are formed.}
 \label{fig:tau2p0}
\end{figure}
\vfill

\begin{figure}[H]
 \begin{subfigure}{0.5\columnwidth}
  \centering
  \includegraphics[width=\columnwidth]{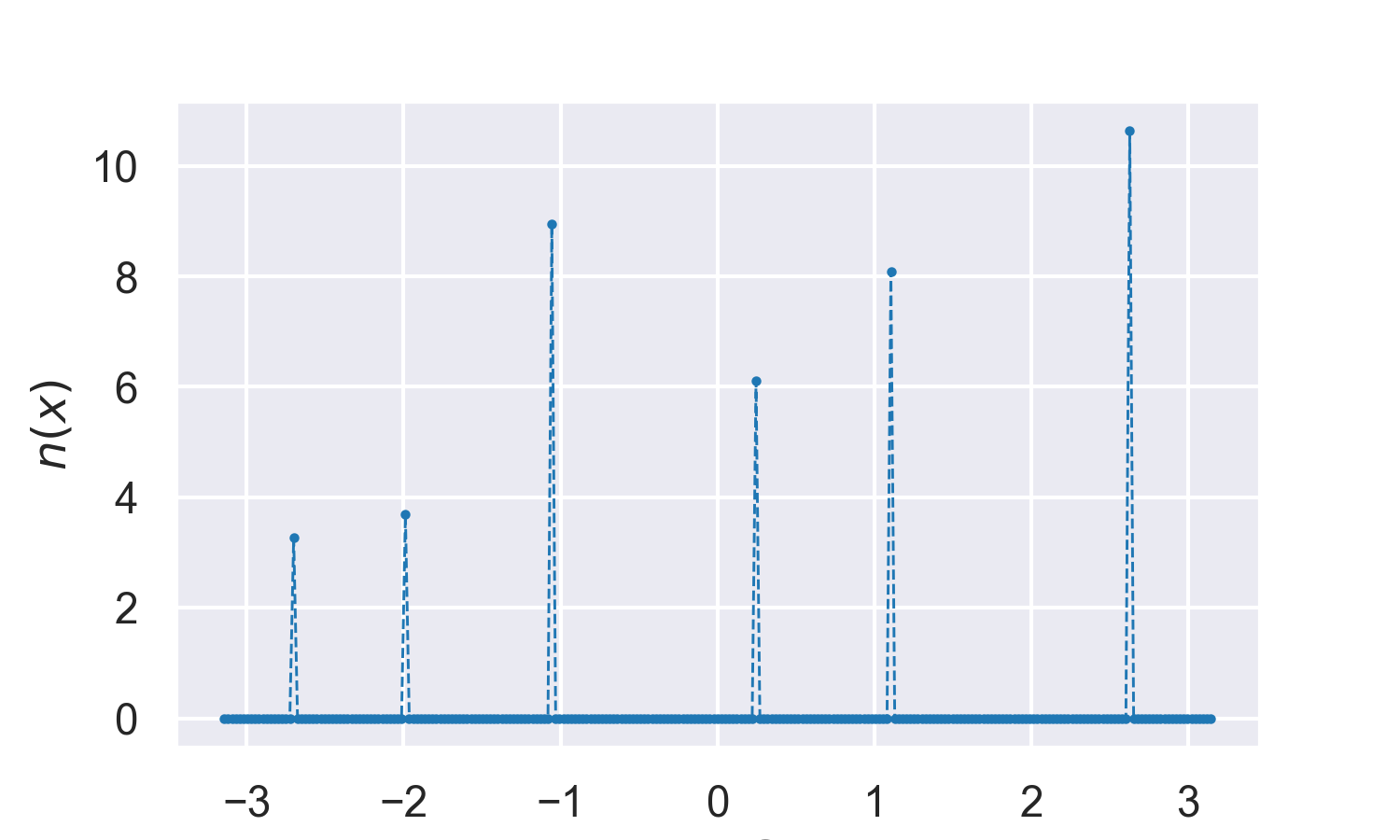}
  \caption{Firms}
 \end{subfigure}
 \begin{subfigure}{0.5\columnwidth}
  \centering
  \includegraphics[width=\columnwidth]{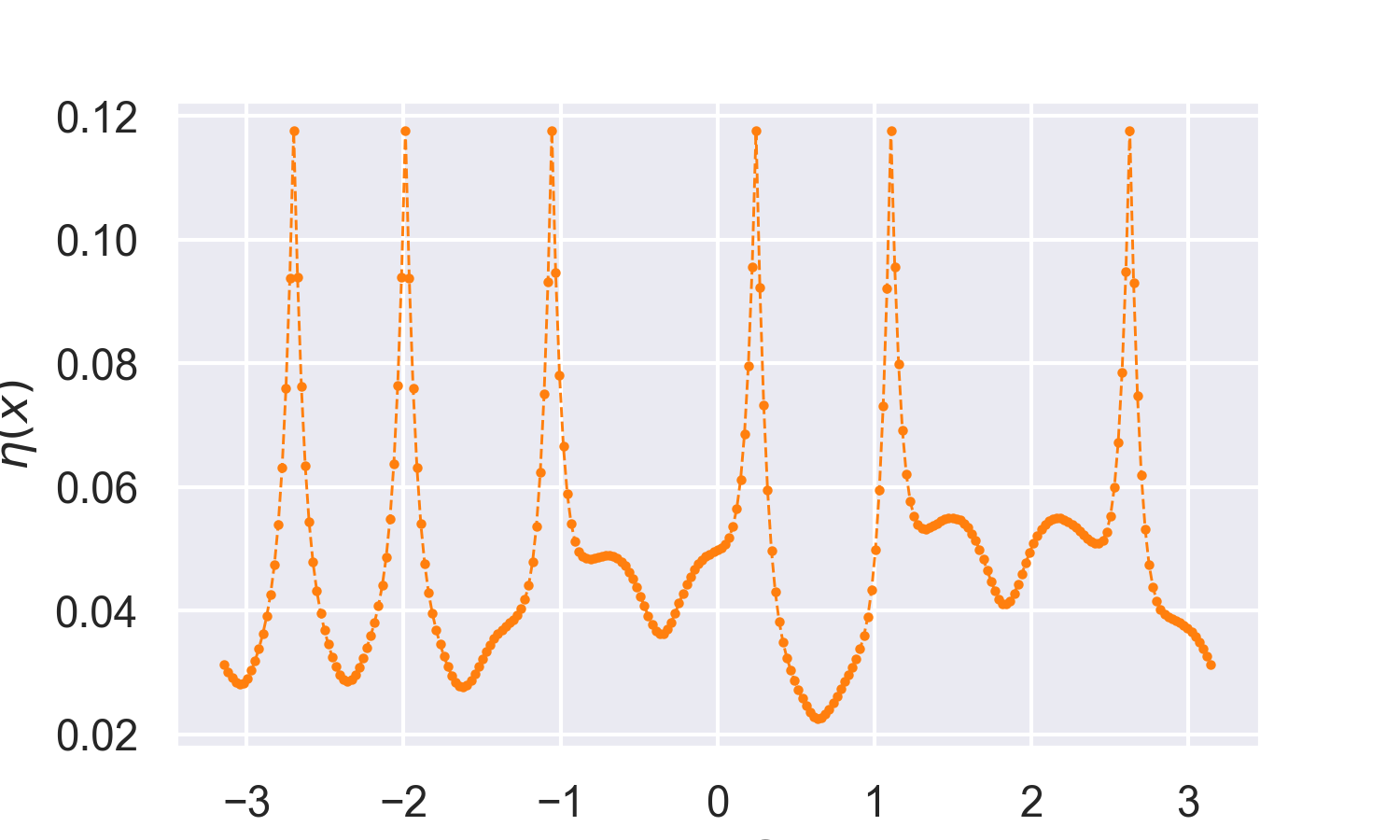}
  \caption{Real wages}
 \end{subfigure}
 \begin{subfigure}{0.5\columnwidth}
  \centering
  \includegraphics[width=\columnwidth]{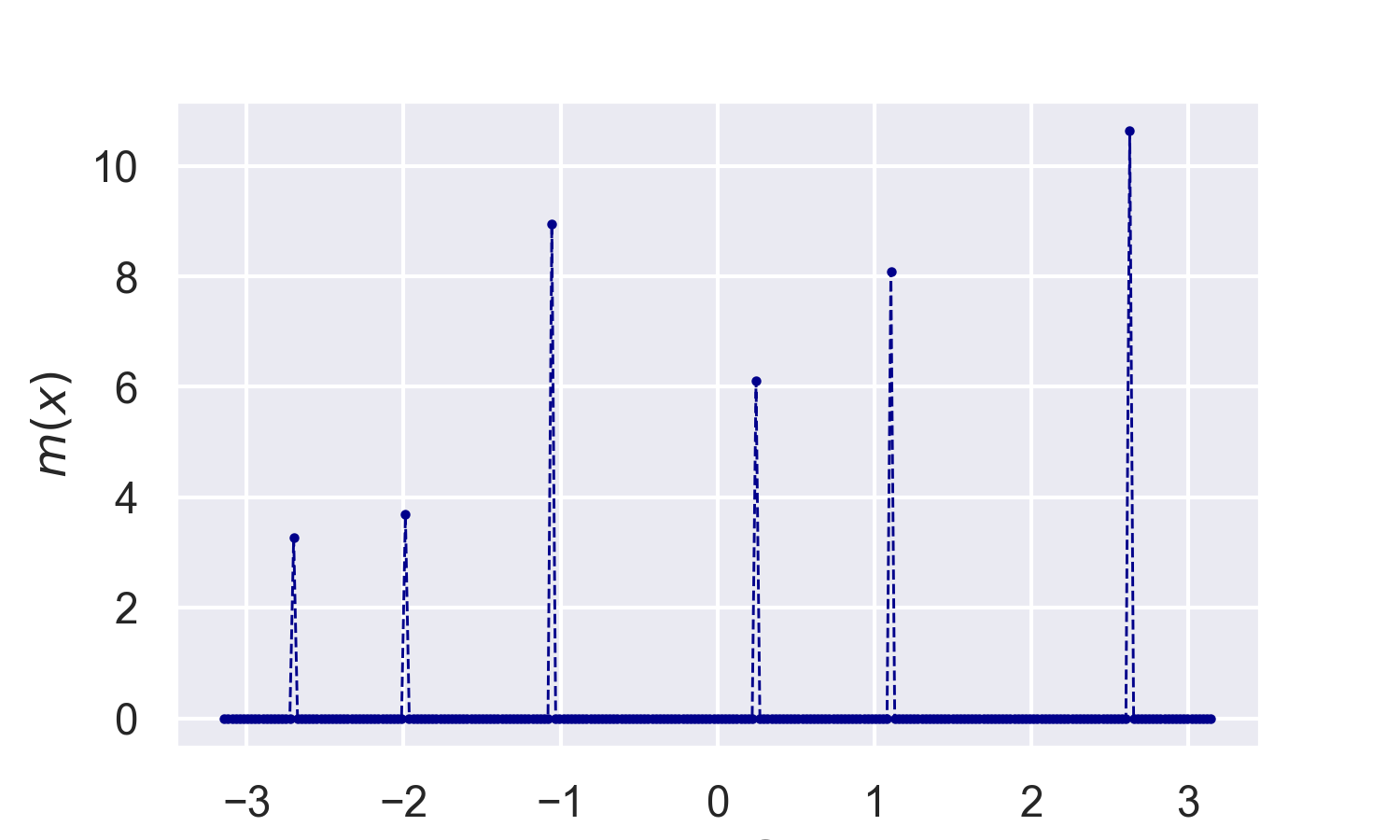}
  \caption{Manufacturing workers}
 \end{subfigure}
 \begin{subfigure}{0.5\columnwidth}
  \centering
  \includegraphics[width=\columnwidth]{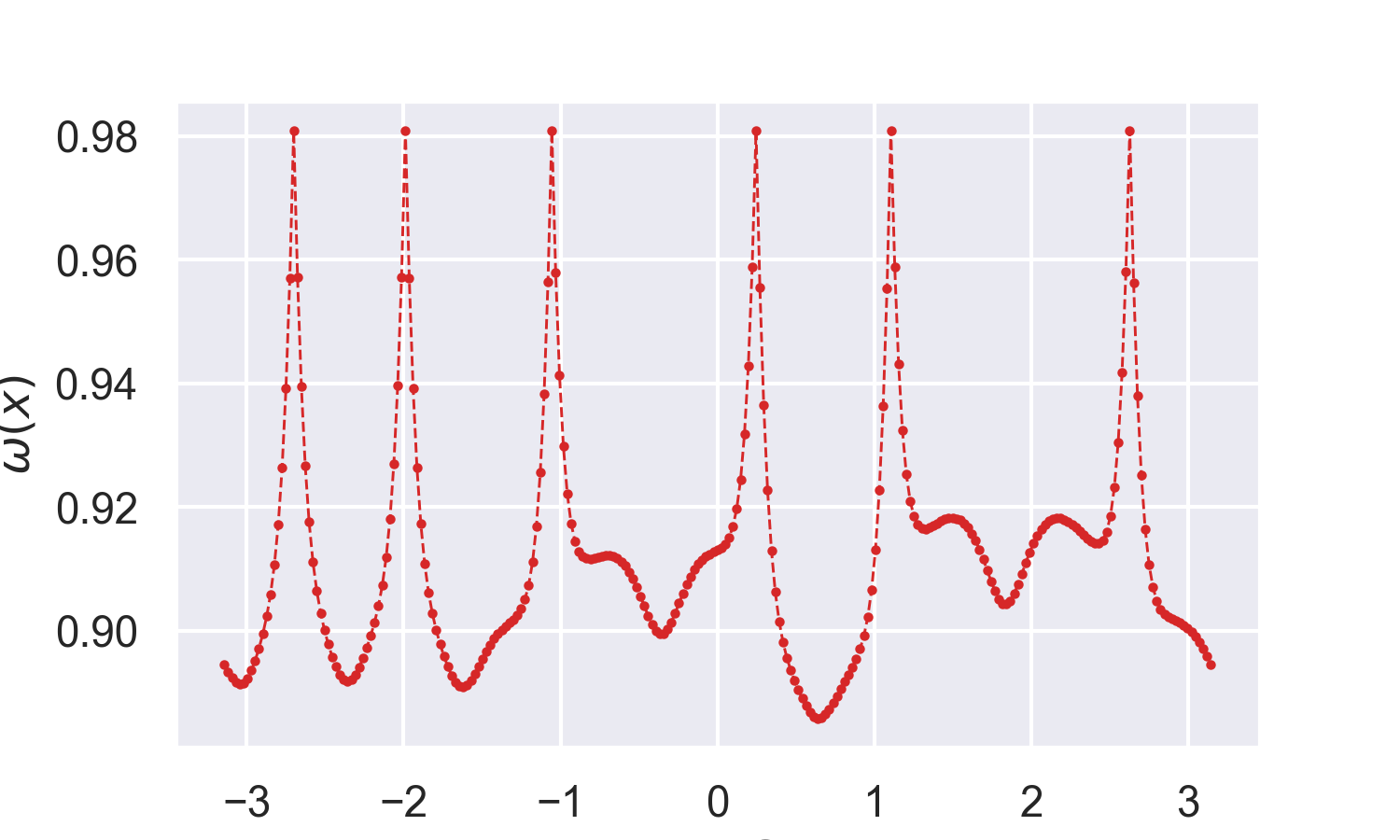}
  \caption{Real profits}
 \end{subfigure}
 \caption{A stationary solution for $\tau=2.6$ when other parameter values are $\mu=0.6$, $\sigma=5.0$, $v_n=1.0$, $v_m=1.0$, and $\rho=1.0$. Six cities are formed.}
 \label{fig:tau2p6}
\end{figure}

\newpage
\restoregeometry 
\noindent
{\bf Remark}: It is worth discussing the symmetry of the numerical stationary solutions. As shown in Figures \ref{fig:tau1p1}-\ref{fig:tau2p6}, when the number of cities is three or more, asymmetric equilibrium---meaning that the size of each city and the distances between adjacent cities are not uniform---is possible and is commonly observed.\footnote{Asymmetric equilibria are frequently observed in other NEG models as well (see, for example, \citet{Fab}, \citet{OhtakeYagi_Asym}, \citet{IkeOnTaka2019}, \citet{Ohtake2023cont}, and \citet{Ohtake2025agriculture})} City size and inter-city distances vary depending on initial values. There seems to be no robust relationship. It is difficult to grasp the economic and intuitive understanding of this matter. However, at the very least, it can be understood as demonstrating the historical path dependency of city size and distribution.

\vspace{1mm}
\noindent
{\bf Remark}: The model \eqref{0} contains fractional terms $n/m$ and $m/n$. Therefore, as solutions $n$ and $m$ asymptotically converge to spiky distributions, these fractional terms must approach the indeterminate form of $0/0$ at $x\notin {\rm supp}(\tilde{n})={\rm supp}(\tilde{m})$, or they may diverge to infinity before converging to stationary solutions. In most cases, computations do not break down, but in a few cases, an appropriate stationary solution may not be obtained. In those cases, the isomorphism observed in the figures above between the real profit distribution and the real wage distribution was broken.

\section{Conclusion}
We consider the model incorporating the dual migration process of firms and manufacturing workers. It has been found that the homogeneous stationary solution is unstable, and as transport costs fall, the eigenmodes with larger absolute spatial frequencies become destabilized first. Analytical discussions have shown that firms and manufacturing workers are distributed isomorphically in any stationary solution, whether homogeneous or non-homogeneous. Numerical simulations have shown that perturbed homogeneous stationary solutions eventually converge to non-homogeneous stationary solutions with spiky distributions of firms and manufacturing workers, whose spikes correspond to cities. The number of cities has been found to decrease as transport costs fall.

As a result, the solutions to the dual migration model proposed in this study exhibit behavior qualitatively similar to that of standard single-migration models. The question posed at the introduction ---``if both firms and workers could migrate independently, rather than one being subordinate to the other, what would the conclusion of NEG be?''--- has now been answered. The answer is that ``qualitatively, it does not change.'' In this regard, the present study provides strong evidence in support of the use of the single-migration model for NEG research.

\section{Appendix}

\subsection{Proof of Proposition \ref{th:Z01} \footnote{The proof is essentially same as that in \citet[pp.~495-496]{Ohtake2025agriculture}. }}\label{subsec:thZ01}

\subsubsection{When $k$ is even}

\vspace{1mm}
Putting $\alpha \rho=X$, we have
\[
Z_k 
=\frac{X^2}{k^2+X^2}.
\]
It is easy to see that
\[
\begin{aligned}
\lim_{X\to 0} \frac{X^2}{n^2+X^2} = 0,\\
\lim_{X\to \infty}\frac{X^2}{n^2+X^2} = 1.
\end{aligned}
\]
and that
\[
\frac{d}{dX}Z_k = \frac{2k^2X}{(k^2+X^2)^2} > 0.
\]

\subsubsection{When $k$ is odd}

\vspace{1mm}
Putting $X=\alpha \rho$, we have
\[
Z_k =
\frac{X^2(1+e^{-X\pi})}{(k^2+X^2)(1-e^{-X\pi})}.
\]
By L'Hopital's theorem, we have
\[
\lim_{X\to 0}Z_k =
\lim_{X\to 0}\frac{2X(1+e^{-\pi X})+X^2(-\pi e^{-\pi X})}{2X(1-e^{-\pi X})+(k^2+X^2)\pi e^{-\pi X}} 
= 0.
\]
It is clear that
\[
\lim_{X\to \infty}Z_k =1.
\]
Differentiating $Z_k$ by $X$ yields
\begin{equation}\label{ddXHbetanodd}
\frac{d}{dX}Z_k
= \frac{2k^2X(1-e^{-2\pi X})-2\pi X^2e^{-\pi X}(k^2+X^2)}{(k^2+X^2)^2(1-e^{-\pi X})^2}.
\end{equation}
To show that \eqref{ddXHbetanodd} is positive, we only have to prove that
\begin{equation}\label{epiXeminuspiXgt}
e^{\pi X}-e^{-\pi X} > \frac{\pi X}{k^2}(k^2+X^2).
\end{equation}
By expanding the left-hand side of \eqref{epiXeminuspiXgt} into the Maclaurin series, we obtain
\[
2\pi X +2\frac{(\pi X)^3}{3!}+2\frac{(\pi X)^5}{5!}+\cdots 
> \frac{\pi X}{k^2}(k^2+X^2).
\]
Therefore, subtracting the right-hand side from the left-hand side yields
\begin{equation}\label{piX2piX33!}
\pi X+\left\{2\frac{(\pi X)^3}{3!}-\frac{\pi}{k^2}X^3\right\}+2\frac{(\pi X)^5}{5!}+\cdots > 0
\end{equation}
Now, the content of the curly brackets
\[
2\frac{(\pi X)^3}{3!}-\frac{\pi}{k^2}X^3
\]
in \eqref{piX2piX33!} is positive. To see this, it is sufficient to check the case of $k=1$, and we see that
\[
\pi\left(\frac{\pi^2}{3}-1\right)X^3>0,
\]
which completes the proof.
\qed

\subsection{Numerical scheme}\label{subsec:numerical_scheme}
Let us see basic settings\footnote{This setting is the same as \citet{Ohtake2023cont}.} for the simulation in Section \ref{sec:simu}. The circle $S$ is identified with the interval $[-\pi,\pi)$. For a function $f$ on $[0,\infty)\times S$, we denote $f(t_h,x_i)=f^h_i$ and $f^h = [f^h_1, f^h_2, \cdots, f^h_I]$ where $x_i=-\pi+(i-1)\Delta x$ for $i=1,2,\cdots,256$ and $t_h=(h-1)\Delta t$ for $h=1,2,\cdots$. The step-size $\Delta x$ and $\Delta t$ are given as $\Delta x = \frac{2\pi}{256}$ and $\Delta t = \frac{1}{100}$, respectively. Approximating the integral $\int_{-\pi}^\pi f(t,x)dx$ by using the trapezoidal rule\footnote{The approximation of an integral over $S$ with the trapezoidal rule is equivalent to that with the Riemann sum as in \eqref{approxintegral}.} yields
\begin{equation}\label{approxintegral}
\int_{-\pi}^\pi f(t,x)dx \approx \sum_{i=1}^I f(t, x_i)\Delta x.
\end{equation}

Let us see an algorithm for obtaining numerical stationary solutions of \eqref{0}. Substituting the first and the third equations into the second one of \eqref{0}, we have 
\begin{equation}\label{fixedpointW}
W(t,x)=\frac{n(t,x)}{m(t,x)}\int_S \frac{(1-\mu)\ol{\phi}+\mu W(t,y)^{\frac{1}{\sigma}}m(t,y)}{\int_S n(t,z)W(t,z)^{\frac{1-\sigma}{\sigma}}T(y,z)^{1-\sigma}dz}T(x,y)^{1-\sigma}dy,
\end{equation}
where $W(t,x):=w(t,x)^{\sigma}$. To obtain the nominal wage for each $n(t,\cdot)$ and $m(t,\cdot)$, $t>0$, we have to solve the fixed point problem \eqref{fixedpointW}.\footnote{A formulation of the nominal wage equation of the original CP model as such a fixed point problem can be found in \citet{TabaEshiSakaTaka}.} Once a solution $W$ is found, the solutions $w$, $G$, $\eta$, and $\omega$ are obtained in a straightforward manner from $w(t,x)=W(t,x)^{\frac{1}{\sigma}}$ and the third to the fifth equations of \eqref{0}. Solving \eqref{fixedpointW} is theoretically difficult, but in practice, a numerical solution converges through repeated substitutions in most cases. That is, for each time step $h$, we only have to iterate the mapping $F:\mathbb{R}^I\to\mathbb{R}^I$ defined by
\[
[F({W^h})]_i := \frac{n_i^h}{m_i^h}\sum_{j=1}^I \frac{(1-\mu)\ol{\phi}_j+\mu {W_j^h}^{\frac{1}{\sigma}}m_j^h}{\sum_{k=1}^I n_k^h{W_k^h}^{\frac{1-\sigma}{\sigma}}T(y_j,z_k)^{1-\sigma}\Delta z}T(x_i,y_j)^{1-\sigma}\Delta y.
\]
where $i=1,2,\cdots,I$ until the maximal norm $\left\|F(W^h)-W^h\right\|_\infty < 10^{-10}$. The time evolution of the share functions $n$ and $m$ is computed by the explicit Euler method. We continue the simulation until $\left\|n^{h+1}-n^h\right\|_\infty<10^{-10}$ and $\left\|m^{h+1}-m^h\right\|_\infty<10^{-10}$.

\subsection{Proof of Proposition \ref{th: existence of critical point}}\label{subsec:existence of cp}
We begin by proving the following two lemmas.
\begin{lem}\label{lem:A/sigma-1<0}
Under the assumption of no black holes \eqref{nbh},
\[
\frac{A}{\sigma}-1 < 0
\]
for any $Z\in [0, 1]$
\end{lem}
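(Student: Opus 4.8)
The plan is to reduce the claim to the negativity of an explicit quadratic in $Z$ and then run a case analysis on the sign of its leading coefficient. Since \eqref{delta>0} gives $\delta(Z)>0$ on $[0,1]$ and $\sigma>1$, multiplying the expression \eqref{defA} for $A$ through by the positive quantity $\delta(Z)$ and then by $\sigma-1$ shows that $\frac{A}{\sigma}-1<0$ is equivalent to
\[
g(Z):=\bigl[(\sigma-1)(\sigma-2)-\mu^2\bigr]Z^2+\mu\sigma Z-(\sigma-1)^2<0.
\]
So it suffices to prove $g(Z)<0$ for all $Z\in[0,1]$.

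First I would dispose of the endpoints. At $Z=0$ we have $g(0)=-(\sigma-1)^2<0$ because $\sigma>1$. At $Z=1$ a short rearrangement gives $g(1)=1-\sigma+\mu\sigma-\mu^2=-(1-\mu)(\sigma-1-\mu)$. Here $1-\mu>0$ since $\mu\in[0,1)$, and from $(1+\mu)(1-\mu)=1-\mu^2\le1$ we get $1+\mu\le\frac{1}{1-\mu}$, so hypothesis \eqref{nbh} yields $\sigma>\frac{1}{1-\mu}\ge 1+\mu$, whence $\sigma-1-\mu>0$ and $g(1)<0$.

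It remains to control $g$ in the interior. Write $c_2:=(\sigma-1)(\sigma-2)-\mu^2$ for the leading coefficient, so $g'(Z)=2c_2Z+\mu\sigma$. If $c_2\ge0$ then $g'(Z)\ge\mu\sigma\ge0$ on $[0,\infty)$, hence $g$ is nondecreasing and $g(Z)\le g(1)<0$ for $Z\in[0,1]$; this case is immediate. If $c_2<0$ then $g$ is strictly concave with its maximum at $Z^*=\mu\sigma/(2|c_2|)\ge0$; when $Z^*\ge1$, $g$ is increasing on $[0,1]$ and again $g\le g(1)<0$, while when $Z^*<1$ the remaining task is to show that the maximal value $g(Z^*)=\dfrac{\mu^2\sigma^2}{4|c_2|}-(\sigma-1)^2$ is negative, i.e.\ $\mu^2\sigma^2<4(\sigma-1)^2\bigl(\mu^2-(\sigma-1)(\sigma-2)\bigr)$. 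I would establish this last inequality by feeding in $\mu\sigma<\sigma-1$ (which is exactly \eqref{nbh}) together with the standing case hypothesis $(\sigma-1)(\sigma-2)<\mu^2$, reducing everything to a polynomial inequality in $\sigma$ and $\mu$. I expect this concave sub-case — pinning down a possible interior maximum using only the ``no black holes'' bound — to be the one genuinely delicate point; the two endpoints and the convex case are essentially bookkeeping.
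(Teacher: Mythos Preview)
Your reduction is exactly the paper's: it too multiplies through by $\delta(Z)>0$ (and, up to your extra factor of $\sigma-1$, obtains the same quadratic, which it calls $\varphi$), checks the two endpoints using \eqref{nbh}, and then splits on the sign of the leading coefficient. So the strategy matches; the issue is the one sub-case you flag as ``genuinely delicate'' and leave unfinished.

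That sub-case cannot be finished, because the needed inequality $\mu^2\sigma^2<4(\sigma-1)^2\bigl(\mu^2-(\sigma-1)(\sigma-2)\bigr)$ is not a consequence of \eqref{nbh} and $(\sigma-1)(\sigma-2)<\mu^2$. Take $\sigma=\tfrac{11}{10}$ and $\mu=\tfrac{2}{25}$. Then $\tfrac{1}{1-\mu}=\tfrac{25}{23}<\tfrac{11}{10}$, so \eqref{nbh} holds, and $c_2=(\sigma-1)(\sigma-2)-\mu^2=-\tfrac{241}{2500}<0$, so we are in the concave case with interior vertex $Z^*=\tfrac{110}{241}\in(0,1)$. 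But at $Z=\tfrac12$ one gets
\[
g\!\left(\tfrac12\right)=\frac{-241}{2500}\cdot\frac14+\frac{11}{125}\cdot\frac12-\frac{1}{100}=\frac{99}{10000}>0,
\]
equivalently $A=\tfrac{693}{575}>\tfrac{11}{10}=\sigma$, so $\tfrac{A}{\sigma}-1>0$. Thus the lemma as stated is false; the paper's own proof has precisely the same gap --- it simply asserts $\varphi(Z_{\text{ver}})<0$ in the concave case without argument, and the example above shows that assertion fails as well. Your instinct that this was the crux was correct: it is where both your sketch and the paper's proof break down.
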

\begin{proof}
Since $\sigma>1$ and $\delta(Z) > 0$, it is sufficient to prove that $\delta(Z)\left(A-\sigma\right)<0$. From \eqref{deltaZk} and \eqref{defA}, we have
\[
\begin{aligned}
\delta(Z)\left(A-\sigma\right)
&= \left(\sigma-2-\frac{\mu^2}{\sigma-1}\right)Z^2
+ \frac{\mu\sigma}{\sigma-1}Z -\sigma + 1 \\
&=:\varphi(Z).
\end{aligned}
\]
It is easy to see that
\[
\varphi(0) = -\sigma+1 < 0
\]
and
\begin{equation}\label{varphi1}
\begin{aligned}
\varphi(1) &= \frac{(1-\mu)\left(-\sigma+1+\mu\right)}{\sigma-1}\\
&< \frac{1-\mu}{\sigma-1}\left(-\frac{1}{1-\mu}+1+\mu\right) = \frac{-\mu^2}{\sigma-1} < 0
\end{aligned}
\end{equation}
In \eqref{varphi1}, the first inequality is due to the assumption of no black holes \eqref{nbh}. Hence, if the coefficient 
\begin{equation}\label{coeffZ2}
\sigma-2-\frac{\mu^2}{\sigma-1}
\end{equation}
of $Z^2$ in $\varphi(Z)$ is positive, then $\varphi(Z) < 0$ for all $Z\in[0,1]$.

Then, we next consider the case \eqref{coeffZ2} is negative, i.e., 
\begin{equation}\label{sig2sig2mu2}
(\sigma-2)(\sigma-1) - \mu^2 < 0.
\end{equation}
The quadratic function $\varphi$ has a vertex at 
\[
Z = Z_{\text{ver}} = \frac{-\mu\sigma}{2\left\{(\sigma-2)(\sigma-1)-\mu^2\right\}}>0
\]
and its value is
\[
\varphi(Z_{\text{ver}}) = \frac{1}{\sigma-1}\left[\frac{-\mu^2\sigma^2}{4\left\{(\sigma-1)(\sigma-2)-\mu^2\right\}} - (\sigma-1)^2\right] < 0
\]
Thus, $\varphi<0$ for all $Z\in[0, 1]$.
\end{proof}

\begin{lem}\label{lem:signA+BZZ*}
Let $Z^*$ be defined as
\begin{equation}\label{Z*}
Z^* := \frac{\mu(2\sigma-1)}{\sigma(1+\mu^2)-1}.
\end{equation}
Then,
\begin{equation}\label{signA+BZZ*}
\begin{aligned}
&A+B > 0 \hspace{3mm} \text{when} \hspace{3mm} Z < Z^*, \\
&A+B < 0 \hspace{3mm} \text{when} \hspace{3mm} Z > Z^*.
\end{aligned}
\end{equation}
\end{lem}
\begin{proof}
From \eqref{defA} and \eqref{defB}, we see that
\begin{equation}\label{AplusB}
A+B = 
\frac{1}{\delta(Z)}
\left[\left(-1-\frac{\mu^2\sigma}{\sigma-1}\right)Z^2 + \frac{\mu(2\sigma-1)}{\sigma-1}Z\right]
\end{equation}
Since $\delta(Z)>0$, the sign of \eqref{AplusB} is determined by the quadratic function in the brackets on the right-hand side. Then, it is easily verified that \eqref{signA+BZZ*} holds.
\end{proof}

\vspace{3mm}

The characteristic polynomial of the matrix \eqref{Mat} is
\begin{equation}\label{charapoly}
\lambda_i^2 - \left[v_n\mu\left(\frac{A}{\sigma} - 1\right) + v_mB\right]\lambda_i
-v_nv_m\mu(A+B) = 0
\end{equation}
where $\lambda_i$ ($i=1$, $2$) stands for any eigenvalues of \eqref{Mat}. By the relationships between roots and coefficients of polynomials known as Vieta's formula,\footnote{As a special case of Vieta's formula (see \citet[p.~275]{Kos1982IntAlg} for example), for the two roots $\lambda_1$ and $\lambda_2$ of a real-coefficient quadratic polynomial $a\lambda^2+b\lambda+c$, the sum of them satisfies $\lambda_1+\lambda_2=-b/a$.} we have
\begin{equation}\label{lambda1+lambda2}
\lambda_1 + \lambda_2 = v_n\mu\left(\frac{A}{\sigma}-1\right) + v_mB.
\end{equation}
Therefore, from \eqref{B<0} and Lemma \ref{lem:A/sigma-1<0}, we have 
\begin{equation}\label{lambda1+lambda2<0}
\lambda_1 + \lambda_2 < 0 \hspace{3mm} \forall Z \in(0, 1].
\end{equation}
Also by Vieta's formula,\footnote{As a special case of Vieta's formula, for the two roots $\lambda_1$ and $\lambda_2$ of a real-coefficient quadratic polynomial $a\lambda^2+b\lambda+c$, their product satisfies $\lambda_1\lambda_2=c/a$.} we have
\begin{equation}\label{lambda1lambda2}
\lambda_1\lambda_2 = -v_nv_m\mu(A+B).
\end{equation}
Therefore, from Lemma \ref{lem:signA+BZZ*} and \eqref{lambda1lambda2}, we have
\begin{equation}\label{ZZ*ZZ*lam1lam2sign<}
\lambda_1\lambda_2 < 0\hspace{3mm} \forall Z < Z^*
\end{equation}
and
\begin{equation}\label{ZZ*ZZ*lam1lam2sign>}
\lambda_1\lambda_2 > 0\hspace{3mm} \forall Z > Z^*.
\end{equation}
From \eqref{lambda1+lambda2<0}, it is verified that\footnote{Note that when $\lambda_1,\lambda_2 \in \mathbb{C}\setminus\mathbb{R}$, the two eigenvalues are conjugate.}
\[
\lambda_1,\lambda_2 \in \mathbb{C}\setminus\mathbb{R}\Rightarrow \Re \lambda_1 = \Re\lambda_2 < 0 \hspace{3mm}\forall Z \in(0, 1]
\]
and 
\[
\hspace{-20mm}\lambda_1,\lambda_2\in\mathbb{R}\Rightarrow \lambda_1<0\hspace{2mm}\text{or}\hspace{2mm}\lambda_2 < 0 \hspace{3mm}\forall Z \in(0, 1]
\]

Suppose that $Z<Z^*$. If $\lambda_1,\lambda_2 \in \mathbb{C}\setminus\mathbb{R}$, then $\lambda_1\lambda_2>0$ because they are conjugate. This contradicts to \eqref{ZZ*ZZ*lam1lam2sign<}. Thus, $\lambda_1, \lambda_2\in\mathbb{R}$ and one of the eigenvalues is positive, and the other is negative.

Suppose that $Z>Z^*$. If $\lambda_1,\lambda_2 \in \mathbb{C}\setminus\mathbb{R}$, then all the eigenvalues have the negative real parts from \eqref{lambda1+lambda2<0} because they are conjugate. If $\lambda_1,\lambda_2 \in \mathbb{R}$, both the eigenvalues are negative because of \eqref{lambda1+lambda2<0} and \eqref{ZZ*ZZ*lam1lam2sign>}.

From the above, it is found that $Z^*$ is the threshold value at which an eigenvalue with a positive real part appears if $Z_k<Z^*$, and the real parts of all eigenvalues are negative if $Z_k>Z^*$. Thus, the critical point $\tau^*_k$ is defined as $\tau$ such that 
\[
Z_k=Z^*.
\]

Finally, $Z^*<1$ is shown as follows. 
\begin{align}
& Z^* < 1 \nonumber\\
&\Leftrightarrow \mu(2\sigma-1)  < \sigma(1+\mu^2) - 1 \nonumber\\
&\Leftrightarrow \sigma(1+\mu^2-2\mu) -1+\mu > 0 \nonumber\\
&\Leftrightarrow (1-\mu)\left\{\sigma(1-\mu)-1\right\} > 0.\label{Z*<1last}
\end{align}
From the assumption of no black holes \eqref{nbh}, we have $\sigma(1-\mu)-1>0$. Thus, \eqref{Z*<1last} holds.
\qed

\subsection{Proof of Proposition \ref{th:criticalpoint_kk+2}}\label{subsec:criticalpoint_kk+2}

As shown in Lemma \ref{lem:signA+BZZ*}, the critical point $\tau_k^*$ is the value of $\tau$ that satisfies the equation
\begin{equation}\label{Z_k=Z^*}
Z_k = Z^*.
\end{equation}
Here, $Z_k$ and $Z^*$ are given by \eqref{Zk} and \eqref{Z*}, respectively. 
From the definition \eqref{Zk} of $Z_k$, the equation \eqref{Z_k=Z^*} for an even $|k|$ becomes
\begin{equation}\label{evenkZk=Z*}
\frac{\alpha^2\rho^2}{k^2+\alpha^2\rho^2} = Z^*.
\end{equation}
Recall that $\alpha$ is given by \eqref{gtc}. Then, solving the equation \eqref{evenkZk=Z*} gives 
\begin{equation}\label{monoinck}
\tau_k^* = \frac{|k|}{(\sigma-1)\rho}\sqrt{\frac{Z^*}{1-Z^*}},
\end{equation}
which clearly increases linearly in $|k|$ which is even.\footnote{For an even frequency $k$, we must consider the changes in frequencies: $k$, $k\pm 2$, $k\pm 4$, $\cdots$.} Recall that the $k$-th mode is unstable in the open interval $(0,\tau_k^*)$. Therefore, under any high transport costs, at least for cases where $|k|$ is even, there exists a mode that becomes unstable if $|k|$ is sufficiently large.\footnote{Even if a certain $k$-th mode is stabilized by making $\tau$ sufficiently large, it is possible to cause unstable modes to appear again by making $|k|$ sufficiently large within the range of even numbers.}
\qed

\subsection{Critical point $\tau_k^*$ is an increasing function of odd $|k|$}\label{tauk*explicit-oddk}

Our goal is to prove the following proposition.
\begin{prop}\label{th:criticalpoint_oddk}
Suppose that \eqref{nbh} holds. For $|k|=2l-1$ where $l\in\mathbb{N}$, the critical point $\tau_k^*$ is positive and satisfies that
\begin{align}
&\tau_{k+2}^* > \tau_k^*\hspace{2mm}\text{when\hspace{2mm}$k\geq 1$},\label{tauk*increasinginoddk_k>1}\\
&\tau_{k-2}^* > \tau_k^*\hspace{2mm}\text{when\hspace{2mm}$k\leq -1$.}\label{tauk*increasinginoddk_k<-1}
\end{align}
Furthermore,
\begin{equation}\label{tauk*toinfty}
\lim_{|k|\to\infty} \tau_k^* = \infty.
\end{equation}
\end{prop}
\begin{proof}
From \eqref{Zk}, the equation \eqref{Z_k=Z^*} which must be satisfied by the critical point $\tau=\tau_k^*$ for any odd $|k|$ is
\begin{equation}\label{ZkeqZ*odd}
\frac{\alpha^2\rho^2}{k^2+\alpha^2\rho^2}\cdot\frac{1+e^{-\alpha\rho\pi}}{1-e^{-\alpha\rho\pi}} = Z^*.
\end{equation}
By the hyperbolic function $\coth(s)=\frac{e^s+e^{-s}}{e^s-e^{-s}}$, the equation \eqref{ZkeqZ*odd} is rewritten as
\begin{equation}\label{ZkeqZ*odd2}
\frac{\alpha^2\rho^2}{k^2+\alpha^2\rho^2}\cdot\coth\left(\frac{\alpha\rho\pi}{2}\right) = Z^*.
\end{equation}
Although $k$ is an odd integer, we handle it as a real number in the following discussion. For any given $k\in\mathbb{R}$, let $\alpha=\alpha\left(|k|\right)$ be a solution to the equation \eqref{ZkeqZ*odd2}. Then 
\begin{equation}\label{def:alpha|k|=sig-1tauk*}
\alpha\left(|k|\right) = (\sigma-1)\tau_k^*
\end{equation}
by the definition \eqref{gtc} of the generalized transport coefficient. By the inverse function theorem, we see that
\begin{equation}\label{invfuncthm}
\frac{d\alpha}{d|k|}\left(|k|\right) = \frac{1}{\frac{d|k|}{d\alpha}(\alpha)}.
\end{equation}
Therefore, to show that $\frac{d\alpha}{d|k|}>0$, it suffices to show that $\frac{d|k|}{d\alpha}>0$.

Solving \eqref{ZkeqZ*odd2} for $|k|=|k|(\alpha)$ yields
\begin{equation}\label{kalpha}
|k|(\alpha) = \alpha\rho\sqrt{\frac{\coth\left(\frac{\alpha\rho\pi}{2}\right)-Z^*}{Z^*}}.
\end{equation}
Note that $\frac{d}{ds}\coth(s)=-\frac{1}{\sinh^2(s)}$. Then, differentiating \eqref{kalpha} by $\alpha$, we have
\begin{equation}\label{d|k|dalpha}
\frac{d|k|}{d\alpha}(\alpha) = \frac{4\rho\left[\coth\left(\frac{\alpha\rho\pi}{2}\right)-Z^*\right]\sinh^2\left(\frac{\alpha\rho\pi}{2}\right)-\alpha\rho^2\pi}{4\sqrt{Z^*}\sqrt{\coth\left(\frac{\alpha\rho\pi}{2}\right)-Z^*}\sinh^2\left(\frac{\alpha\rho\pi}{2}\right)}.
\end{equation}
Since the denominator of this expression is positive,\footnote{Recall that $Z^*\in(0,1)$ and that $\coth(s)>1$ for $s>0$.} we need only consider the numerator to show that $\frac{d|k|}{d\alpha}(\alpha)>0$.

Let the numerator of \eqref{d|k|dalpha} be $N(\alpha)$. Then,
\begin{equation}\label{Nalpha}
N(\alpha) = 4\rho\coth\left(\frac{\alpha\rho\pi}{2}\right)\sinh^2\left(\frac{\alpha\rho\pi}{2}\right)
- 4\rho Z^* \sinh^2\left(\frac{\alpha\rho\pi}{2}\right) -\alpha\rho^2\pi.
\end{equation}
In the right-hand side of \eqref{Nalpha}, applying
\begin{equation}\label{coth=cosh/sinh}
\coth(s) = \frac{\cosh(s)}{\sinh(s)}
\end{equation}
to the first term, and applying
\begin{equation}\label{2sinh^2=cosh2x-1}
2\sinh^2(s) = \cosh(2s) - 1
\end{equation}
to the second term, we have 
\begin{equation}\label{Nalpha_2nd}
N(\alpha) = 4\rho\cosh\left(\frac{\alpha\rho\pi}{2}\right)\sinh\left(\frac{\alpha\rho\pi}{2}\right)-2\rho Z^*\left[\cosh\left(\alpha\rho\pi\right) - 1\right] - \alpha\rho^2\pi.
\end{equation}
In the right-hand side of \eqref{Nalpha_2nd}, by applying
\begin{equation}\label{double-angle_sinh2x}
2\sinh(s) \cosh(s) = \sinh (2s)
\end{equation}
to the first term, we obtain
\begin{equation}\label{Nalpha_simple}
N(\alpha) = 2\rho\sinh\left(\alpha\rho\pi\right) -2\rho Z^*\left[\cosh\left(\alpha\rho\pi\right)-1\right] -\alpha\rho^2\pi.
\end{equation}
It is easy to see that
\begin{equation}\label{N0=0}
N(0) = 0.
\end{equation}

Let us consider the derivative of $N(\alpha)$. Since $\frac{d}{ds}\sinh(s)=\cosh(s)$ and $\frac{d}{ds}\cosh(s)=\sinh(s)$, differentiating \eqref{Nalpha_simple} yilelds
\begin{equation}\label{ddalphaN}
\frac{d}{d\alpha}N(\alpha) = 2\rho^2\pi\cosh\left(\alpha\rho\pi\right)-2\rho^2\pi Z^*\sinh\left(\alpha\rho\pi\right)-\rho^2\pi.
\end{equation}
It is easy to see that
\begin{equation}\label{ddalphaN0>0}
\frac{d}{d\alpha}N(0) = \rho^2\pi > 0.
\end{equation}
We only need to consider the range where $|k|^2\geq 1$. Solving the inequality of $|k|^2\geq 1$ obtained from \eqref{kalpha} for $Z^*$ gives
\begin{equation}\label{rangeofZ*fork>=1}
Z^* \leq \frac{\alpha^2\rho^2\coth\left(\frac{\alpha\rho\pi}{2}\right)}{1+\alpha^2\rho^2}.
\end{equation}
From \eqref{ddalphaN} and \eqref{rangeofZ*fork>=1}, we see that
\begin{equation}\label{ddalphaNineq}
\frac{d}{d\alpha}N(\alpha) \geq 
2\rho^2\pi\cosh\left(\alpha\rho\pi\right)-\frac{2\alpha^2\rho^4\pi}{1+\alpha^2\rho^2} \coth\left(\frac{\alpha\rho\pi}{2}\right)\sinh\left(\alpha\rho\pi\right)-\rho^2\pi.
\end{equation}
In the second term of the right-hand side of \eqref{ddalphaNineq}, using \eqref{coth=cosh/sinh} and \eqref{double-angle_sinh2x} gives
\begin{equation}\label{ddalphaNineq_2nd}
\frac{d}{d\alpha}N(\alpha) \geq 
2\rho^2\pi\cosh\left(\alpha\rho\pi\right)-\frac{4\alpha^2\rho^4\pi}{1+\alpha^2\rho^2}\cosh^2\left(\frac{\alpha\rho\pi}{2}\right)-\rho^2\pi.
\end{equation}
In the first term of the right-hand side of \eqref{ddalphaNineq_2nd}, by using
\begin{equation}\label{cosh2x=cosh^2x+sinh^2x}
\cosh(2s) = \cosh^2(s) + \sinh^2(s),
\end{equation}
we finally obtain the inequality
\begin{equation}\label{ddalphaNfinal}
\frac{d}{d\alpha}N(\alpha) \geq
2\rho^2\pi\left[\frac{1-\alpha^2\rho^2}{1+\alpha^2\rho^2}\cosh^2\left(\frac{\alpha\rho\pi}{2}\right) + \sinh^2\left(\frac{\alpha\rho\pi}{2}\right) - \frac{1}{2}\right].
\end{equation}
We only have to show that the right-hand side of \eqref{ddalphaNfinal} is positive. That is equivalent to
\begin{equation}\label{ddalphaNineqRHSenum}
(1-\alpha^2\rho^2)\cosh^2\left(\frac{\alpha\rho\pi}{2}\right)+(1+\alpha^2\rho^2)\sinh^2\left(\frac{\alpha\rho\pi}{2}\right)-\frac{1}{2}(1+\alpha^2\rho^2) > 0
\end{equation}
By using \eqref{cosh2x=cosh^2x+sinh^2x} and $\cosh^2(s) - \sinh^2(s) = 1$, we have 
\begin{align}
\eqref{ddalphaNineqRHSenum} &\iff \cosh(\alpha\rho\pi) - \frac{3}{2}\alpha^2\rho^2 - \frac{1}{2} > 0\\
&\iff \left(1+\frac{\alpha^2\rho^2\pi^2}{2}+\frac{\alpha^4\rho^4\pi^4}{4!}+\cdots\right) - \frac{3}{2}\alpha^2\rho^2 - \frac{1}{2} > 0\\
&\iff \frac{1}{2} + \frac{\alpha^2\rho^2}{2}\left(\pi^2-3\right)+\frac{\alpha^4\rho^4\pi^4}{4!}+\cdots > 0.
\end{align}
Here, we use the Maclaurin series for $\cosh(\alpha\rho\pi)$. The last inequality is correct. This, together with \eqref{ddalphaNfinal} and  \eqref{ddalphaNineqRHSenum}, leads to
\begin{equation}\label{ddalphaN>0}
\frac{d}{d\alpha}N(\alpha) > 0,\quad\forall \alpha>0.
\end{equation}
From \eqref{N0=0}, \eqref{ddalphaN0>0}, and \eqref{ddalphaN>0}, we obtain $N(\alpha) > 0,~\forall \alpha>0$. Thus, we finally obtain
\begin{equation}\label{dkdalpha>0}
\frac{d|k|}{d\alpha}(\alpha) > 0
\end{equation}
when $|k|\geq 1$.

We can now show that $\alpha\left(|k|+2\right)>\alpha\left(|k|\right)$. From \eqref{invfuncthm} and \eqref{dkdalpha>0}, we have
\begin{equation}
\frac{d\alpha}{d|k|}\left(|k|\right)>0,\quad\forall|k|\geq 1.
\end{equation}
Therefore,
\[
\alpha\left(|k|+2\right)-\alpha\left(|k|\right) = \int_{|k|}^{|k|+2}\frac{d\alpha}{d|k|}\left(|k|\right)d|k| > 0.
\]
This immediately yields \eqref{tauk*increasinginoddk_k>1} and \eqref{tauk*increasinginoddk_k<-1} from the definition \eqref{def:alpha|k|=sig-1tauk*} of $\alpha\left(|k|\right)$. 

We can also show that $\lim_{|k|\to\infty}\alpha\left(|k|\right)=\infty$. Since $\coth(s)\to 1$ as $s\to \infty$, it is observed from \eqref{kalpha} that
\[
|k|(\alpha)\to\alpha\rho\sqrt{\frac{1-Z^*}{Z^*}}
\]
as $\alpha\to\infty$. See Fig.~\ref{fig:k=k(alpha)}, which illustrates a sketch of the graph of $|k|(\alpha)$.\footnote{This is a rough sketch and is not based on precise values.} Hence, 
\[
\alpha\left(|k|\right) \to \frac{|k|}{\rho}\sqrt{\frac{Z^*}{1-Z^*}}
\]
as $|k|\to\infty$. This implies that $\lim_{|k|\to\infty}\alpha\left(|k|\right)=\infty$, which immediately yields \eqref{tauk*toinfty} from the definition \eqref{def:alpha|k|=sig-1tauk*} of $\alpha\left(|k|\right)$.
\begin{figure}[H]
\begin{center}
\begin{tikzpicture}[scale=0.6]
\def\Z{0.7}
\begin{axis}[
    axis lines=middle,
    xmin=0, xmax=11,
    ymin=0, ymax=7,
    xlabel=$\alpha$,
    ylabel=$|k|$,
    every axis x label/.style={at={(ticklabel* cs:1.0)}, anchor=west},
    every axis y label/.style={at={(ticklabel* cs:1.0)}, anchor=south},
    grid=none,
    xtick=\empty,
    ytick=\empty,
    extra x ticks={0},
    extra x tick labels={O},
    extra x tick style={ticklabel style={anchor=north east}},
    extra y ticks={1},
    extra y tick labels={$1$},
    extra y tick style={
        grid=major,
        grid style={dashed, gray},
        ticklabel style={anchor=east}
    },
    samples=100,
    domain=0.01:10
]
\addplot[thick, blue, smooth] {
    x * sqrt( ( (exp(pi*x/2) + exp(-pi*x/2)) / (exp(pi*x/2) - exp(-pi*x/2)) - \Z ) / \Z )
};
\end{axis}
\end{tikzpicture}
\caption{Sketch of the graph of $|k|(\alpha)$}\label{fig:k=k(alpha)}
\end{center}
\end{figure}

\end{proof}

\subsection{Proof of Proposition \ref{prop:neqm}}\label{proof:neqm}
From \eqref{maximizedprofit}, we have
\begin{equation}\label{maximizedprofitrewritten}
(\sigma-1)\frac{g(x)}{w(x)} = \frac{Mm(x)}{Nn(x)}
\end{equation}
Applying \eqref{rw} and \eqref{rp} to \eqref{maximizedprofitrewritten}, we have 
\begin{equation}\label{fracetaomegafrac1sigm1fracMmNn}
\frac{\eta(x)}{\omega(x)} = \frac{1}{\sigma-1}\frac{Mm(x)}{Nn(x)}.
\end{equation}
From the replicator equations \eqref{dynn} and \eqref{dynm}, since $\frac{\partial \tilde{n}}{\partial t}=\frac{\partial \tilde{m}}{\partial t}=0$ in stationary solutions, we have
\[
\tilde{\eta}(x)=\tilde{\eta}(y),\hspace{3mm}\forall x,~y \in {\rm supp}(\tilde{n})
\]
and
\[
\tilde{\omega}(x)=\tilde{\omega}(y),\hspace{3mm}\forall x,~y \in {\rm supp}(\tilde{m}).
\]
These immediately yield 
\begin{equation}\label{fracetaomegaequals}
\frac{\tilde{\eta}(x)}{\tilde{\omega}(x)} = \frac{\tilde{\eta}(y)}{\tilde{\omega}(y)},\hspace{3mm}\forall x,~y\in {\rm supp}(\tilde{n})={\rm supp}(\tilde{m}).
\end{equation}
Applying \eqref{fracetaomegafrac1sigm1fracMmNn} to \eqref{fracetaomegaequals} gives
\[
\frac{\tilde{m}(x)}{\tilde{n}(x)} = \frac{\tilde{m}(y)}{\tilde{n}(y)},\hspace{3mm}\forall x,~y\in {\rm supp}(\tilde{n})={\rm supp}(\tilde{m})
\]
which implies that
\begin{equation}\label{mpron}
\tilde{m}(x) = K\tilde{n}(x),\hspace{3mm} \forall x\in  {\rm supp}(\tilde{n})={\rm supp}(\tilde{m}).
\end{equation}
for a certain constant $K:=\frac{\tilde{m}(y)}{\tilde{n}(y)}>0$. By integrating both sides of  \eqref{mpron} on $S$, and using \eqref{conservation}, we obtain $K\equiv 1$. Thus, \eqref{mpron} imediately gives
\[
\tilde{n}(x)=\tilde{m}(x),\hspace{3mm} \forall x\in  {\rm supp}(\tilde{n})={\rm supp}(\tilde{m}).
\]
\qed

\vspace{10mm}
\noindent
{\large {\bf Statements and Declarations}}

\vspace{3mm}
\noindent
{\small {\bf Competing Interests}:\\
The author has no competing interests to declare.}

\vspace{3mm}
\noindent
{\small{\bf Funding}:\\
No funding was received to conduct this study.}

\vspace{10mm}
\noindent
{\bf Acknowledgements}

\vspace{3mm}
The author would like to express sincere gratitude to the anonymous reviewers. Their insightful and constructive comments have improved the paper. In particular, the analytical result in Subsection \ref{subsec:analytical} was proposed by one of the reviewers. The naming of ``co-location'' and ``proportionality'' was also given by the reviewer. These results have greatly enhanced the generality of the main claims of this paper.

\bibliographystyle{econ-aea}

\newpage
\pagenumbering{roman}
\listoffigures

\end{document}